\def\ps@headings{%
\def\@oddhead{\mbox{}\scriptsize\rightmark \hfil \thepage}%
\def\@evenhead{\scriptsize\thepage \hfil \leftmark\mbox{}}%
\def\@oddfoot{}%
\def\@evenfoot{}}
\makeatother \pagestyle{headings}
\tikzset{join/.code=\tikzset{after node path={%
\ifx\tikzchainprevious\pgfutil@empty\else(\tikzchainprevious)%
edge[every join]#1(\tikzchaincurrent)\fi}}}
\newcommand{\specialcell}[2][c]{%
  \begin{tabular}[#1]{@{}c@{}}#2\end{tabular}}
\newtheorem{theorem}{Theorem}
\newtheorem{lemma}[theorem]{Lemma}
\begin{document}
\bibliographystyle{IEEEtran}

\title{NOMA-Aided Mobile Edge Computing \\via User Cooperation}
\IEEEoverridecommandlockouts

\author{\IEEEauthorblockN{Yuwen Huang, Yuan Liu,~\IEEEmembership{Senior Member,~IEEE}, and Fangjiong Chen}



}
\maketitle
\vspace{-1.5cm}
\maketitle
\begin{abstract}
Exploiting the idle computation resources of mobile devices in mobile edge computing (MEC) system can achieve both channel diversity and computing diversity as mobile devices can offload their computation tasks to nearby mobile devices in addition to MEC server embedded access point (AP). In this paper, we propose a non-orthogonal multiple-access (NOMA)-aided cooperative computing scheme in a basic three-node MEC system consisting of a user, a helper, and an AP. In particular, we assume that the user can simultaneously offload data to the helper and the AP using NOMA, while the helper can locally compute data and offload data to the AP at the same time. We study two optimization problems, energy consumption minimization and offloading data maximization, by joint communication and computation resource allocation of the user and helper. We find the optimal solutions for the two non-convex problems by some proper mathematical methods. Simulation results are presented to demonstrate the effectiveness of the proposed schemes. Some useful insights are provided for practical designs.
\end{abstract}

\begin{IEEEkeywords}
 Cooperative computing, mobile edge computing (MEC), non-orthogonal multiple-access (NOMA), resource allocation.
\end{IEEEkeywords}

 \section{Introduction}
 The fifth generation (5G) wireless communication systems are expected to provide ultra-low latency and data hungry services for the mobile devices \cite{Chiang2016}. For example, the augmented reality/virtual reality (AR/VR) applications require a large amount of image and audio information to process within milliseconds latency. Meanwhile, the systems need to accommodate billions of Internet-of-Things (IoT) devices. Consequently, the limitation of mobile devices' battery lives and computation capacities is a crucial challenge for future communication systems. Normally, mobile cloud computing (MCC) has been considered as an efficient and powerful technology to support big data processing by utilizing rich computation resources at the remote cloud centers \cite{Dinh2013}. However, MCC  always suffers the long propagation distances from mobile devices to the cloud centers, resulting in the failure of meeting the critical latency requirements. Fortunately, this problem can be tackled in mobile edge computing (MEC) where the strict latency requirements can be guaranteed  \cite{Chiang2016}. Distinguished from MCC, the MEC servers are dedicatedly deployed at the network edge, such as the base stations (BSs) and the access points (APs), which can provide cloud-like computing service for the mobile devices. Since the mobile devices are in proximity to BSs and APs, the mobile devices are able to offload their computation-intensive tasks with high rates as well as low latency \cite{Barbarossa2014,Mach2017,Mao2017,You2017,Liu2018,Bi2018,Liang2019}.

 However, the future wireless networks are expected to serve massive devices. It may not be feasible if all data are offloaded to the AP since the AP's computation capacities are usually limited. To deal with this problem, an effective way is to explore the devices' computation resources in the network, which not only alleviates the AP's workloads but also fully utilizes the network resources. This paradigm is known as \emph{cooperative computing} \cite{Wang2016,Pu2016,You2018}. For instance, an incentive scheme was proposed in \cite{Wang2016} for encouraging mobile devices to share unused resources. A device-to-device (D2D) communication based computing collaboration and incentive mechanism was designed in \cite{Pu2016}. The authors in \cite{You2018} considered a two-user case where one user is allowed to offload computational input-data to another user.

 In order to support task offloading among multiple users in MEC, wireless communication resource  blocks (e.g., time/frequency/code) need to be utilized efficiently. Therefore, designing a suitable multiple-access scheme in MEC system is one of the most important aspects. In general, multiple-access techniques can be categorized into two different approaches \cite{Wang2006}, namely, orthogonal multiple-access (OMA) and non-orthogonal multiple-access (NOMA). In OMA schemes, users' signals are orthogonal to each other, which however cannot fully explore the capacity of the multiple-access channel and thus fundamentally limits the performance of task offloading in MEC systems. On the other hand, NOMA has been introduced as a promising solution in 5G communication systems \cite{Dai2015,Ding2017a,Ding2017b}. Different from the OMA schemes, one resource block is allowed to be allocated to multiple users in NOMA, for improving spectral efficiency and massive connections. There are a handful of works considering NOMA-aided MEC \cite{Wang2017,Wang2018a,Kiani2018,Ding2018a,Ai2017,Wu2018}.  In \cite{Ai2017} and \cite{Wu2018}, the authors considered NOMA-MEC for latency minimization.   However, these NOMA-MEC works focused on task offloading between devices and AP, without consideration of cooperative computing. In \cite{Cao2019a}, the authors considered a basic three-nodes system consisting of a user, a helper, and an AP, where the helper acts as a relay to help the user offload part of its task to the AP. Note that the helper does not have it own task to process and thus acts as a pure relay node. Moreover, the system model in \cite{Cao2019a} is based on orthogonal offloading. 

In this paper, we consider a basic three-node NOMA-aided MEC system, consisting of a user, a helper, and an AP integrated with a MEC server, as shown in Fig. \ref{fig:system}. Both the user and helper have individual computation tasks to successfully complete under a common latency constraint. The process of the proposed NOMA-aided cooperative computing is described as follows. At the first slot, the user adopts NOMA transmission to simultaneously offload its computation input-data to the helper and the AP. Then at the second slot, the helper offloads a part of its own computation input-data to the AP, in parallel with the task execution of both itself and the user.

The core idea of the proposed cooperative computing scheme is that, for a common latency constraint, at the user-side, the user has extra computation resource offered by the helper in addition to the AP. At the helper-side, if the helper executes part of the user's task, the required time of task offloading at the user can be reduced. Thus the helper in return has longer transmission time to offload its own task. Moreover, as NOMA is adopted for offloading at the user, the helper and the AP can receive and compute the user's offloaded task at the same time over the whole resource block. This further improves the offloading performance.
The main contribution of this paper is summarized as follows.
\begin{itemize}
	\item  We propose a MEC framework for cooperative computing based on NOMA transmission, in which the communication and computation resources of the network nodes are jointly designed for improving the MEC performance. Both energy consumption minimization and offloading data maximization problems are studied.
	\item For the energy consumption minimization problem, we jointly optimize the task partition, transmit power, and offloading time of both the user and helper. This problem is non-convex and we convert the original problem into a convex optimization problem which can be solved optimally. 
	\item For the offloading data maximization problem, we maximize the sum data offloaded by the user and helper by joint task partition, transmit power, and time allocation. This problem is also a non-convex problem. By analyzing the objective function, we simplify this problem with a single variable without loss in optimality. Then we propose an efficient algorithm to solve this problem globally optimally.
 \end{itemize}

The remainder of this paper is organized as follows. Section \ref{se1} introduces the system model and respectively presents the formulation of the energy consumption minimization problem and the offloading data maximization problem. The corresponding resource allocation policies for these two problems are proposed in Sections \ref{se2} and \ref{se3}, respectively. Simulation results and discussions are provided in Section \ref{se4}. Finally, Section \ref{se5} concludes the paper.

\section{System Model and Problem Formulation}\label{se1}
In this section, we first describe the system model of the considered NOMA-aided cooperative computing system. Then we formulate two optimization problems, i.e., the energy consumption minimization and the offloading data maximization problems.
\subsection{System Model}
\begin{figure}[t]
\begin{centering}
\includegraphics[scale=0.55]{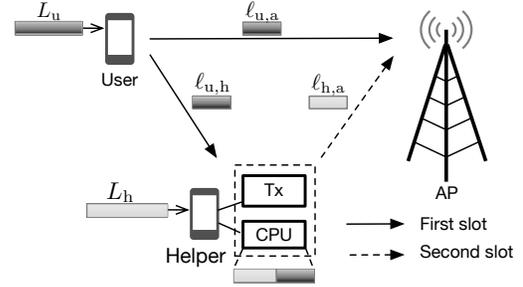}
\vspace{-0.1cm}
 \caption{System model of NOMA-aided MEC. }\label{fig:system}
\end{centering}
\vspace{-0.1cm}
\end{figure}
\begin{figure}[t]
  \centering
  \captionsetup{font=scriptsize}
  \begin{tikzpicture}[node distance=1.5cm, remember picture]
    \tikzstyle{state_user}=[shape=rectangle, thick, draw, minimum width=6.8cm, minimum height = 0.8cm,outer sep=0pt]
    \tikzstyle{state_helper_1}=[shape=rectangle, thick, draw, minimum width=3cm, minimum height = 0.8cm,outer sep=0pt]
    \tikzstyle{state_helper_2}=[shape=rectangle, thick, draw, minimum width=3.8cm, minimum height = 0.8cm,outer sep=0pt]
    \tikzstyle{state_offloading_1}=[shape=rectangle, thick, draw, minimum width=2.6cm, minimum height = 0.8cm,outer sep=0pt]
    \tikzstyle{state_offloading_2}=[shape=rectangle, thick, draw, minimum width=1.8cm, minimum height = 0.8cm,outer sep=0pt]
    \tikzstyle{state_offloading_3}=[shape=rectangle, thick, draw, minimum width=2.4cm, minimum height = 0.8cm,outer sep=0pt]
    \tikzset{dot/.style={circle,fill=#1,inner sep=0,minimum size=0pt}}
    \node[] (f1) at (0,0.8) [] {}; 
    \node[] (f2) at (0,-3.8) [] {}; 
    \node[] (f11) at (0,-3.6) [] {}; 
    \node[] (f12) at (0,0.6) [] {}; 
    \node[] (user) at (-0.6,0) [] {\scriptsize User}; 
    \node[state_user] (U_LC) at (3.4,0) [] {\scriptsize Local Computing};
    \node[] (user) at (-0.6,-1.5) [] {\scriptsize Helper}; 
    \node[state_helper_1] (U_LC) at (1.5,-1.5) [] {\scriptsize Computing its own task};
    \node[state_helper_2] (U_CU) at (4.9,-1.5) [] {\scriptsize Computing user's task};
    \node[] (user) at (-0.6,-3) [] {\scriptsize Offloading}; 
    \node[state_offloading_1] (U_LC) at (1.3,-3) [] {\scriptsize User $\rightarrow$ Helper \& AP};
    \node[state_offloading_2] (U_CU) at (3.5,-3) [] {\scriptsize Helper $\rightarrow$ AP};
    \node[state_offloading_3] (U_CU) at (5.6,-3) [] {\scriptsize Result downloading};
    \node[] (f3) at (3.4,0.6) [] {\scriptsize $T$};
    \node[] (f4) at (6.8,0.6) [] {}; 
    \node[] (f5) at (1.3,-3.6) [] {\scriptsize $t_{\mathrm{u}}$};
    \node[] (f6) at (2.6,-3.6) [] {}; 
    \node[] (f7) at (3.5,-3.6) [] {\scriptsize $t_{\mathrm{h}}$}; 
    \node[] (f8) at (4.4,-3.6) [] {}; 
    \node[] (f9) at (5.6,-3.6) [] {\scriptsize $t_{\mathrm{d}}\approx 0$}; 
    \node[] (f10) at (6.8,-3.6) [] {}; 
    \draw[->, thick]
    (f3) edge (f12)
    (f3) edge (f4) 
    (f5) edge (f11)
    (f5) edge (f6)
    (f7) edge (f6)
    (f7) edge (f8)
    (f9) edge (f8)
    (f9) edge (f10);
    \node[] (f19) at (6.8,0) [] {}; 
    \node[] (f20) at (6.8,0.8) [] {}; 
    \node[] (f13) at (2.6,-3) [] {}; 
    \node[] (f14) at (4.4,-3) [] {}; 
    \node[] (f15) at (6.8,-3) [] {}; 
    \node[] (f16) at (2.6,-3.8) [] {}; 
    \node[] (f17) at (4.4,-3.8) [] {}; 
    \node[] (f18) at (6.8,-3.8) [] {}; 
    \draw[thick]
          (f1) -- (f2)
          (f13) -- (f16)
          (f14) -- (f17)
          (f15) -- (f18)
          (f19) -- (f20);
  \end{tikzpicture}
  \caption{ An illustration of the proposed NOMA-aided partial offloading protocol. }\label{fig:timeslot}
\end{figure}


As shown in Fig. \ref{fig:system}, we consider a fundamental three-node model consisting of a user, a helper, and an AP. The AP is integrated with MEC server to execute the computation tasks offloaded by helper and user. In general the AP itself can act as the central node for implementing optimization and processing. That is, the AP can collect the network information and then send the optimized policies to other nodes (i.e., the user and helper) to take actions. The helper is located between the user and the AP. For easy implementation, all these three nodes are equipped with single antenna and operate in half duplex mode.  Both the user and helper have their own computation tasks to execute with data-size $L_{\mathrm{u}}$ and $L_{\mathrm{h}}$ (in bits), respectively, which should be completed within a common time $T$. In this paper, partial offloading is assumed. That is, the computation tasks of both the user and helper can be divided into  independent parts, which can be executed in parallel by local computing and offloading. Since the AP is usually connected to power grids, the energy consumption for computing at the AP can be neglected, compared with the resource-constrained mobile devices, i.e., the user and helper. Thus we only focus on the energy and latency of the user and helper in this paper.

\subsubsection{Offloading model}
 The NOMA-aided partial offloading is shown in Fig. \ref{fig:timeslot}, where the total time duration $T\in \mathbb{R}_{\geq 0}$ is divided into three slots. In the first slot $t_{\mathrm{u}}$, the user concurrently offloads parts of its computation task with $\ell_{\mathrm{u,h}}\in \mathbb{R}_{\geq 0}$ and $\ell_{\mathrm{u,a}}\in \mathbb{R}_{\geq 0}$ bits to the helper and the AP respectively using NOMA. At the second time slot $t_{\mathrm{h}}\in \mathbb{R}_{\geq 0}$, the helper partially offloads $\ell_{\mathrm{h,a}}\in \mathbb{R}_{\geq 0}$ bits to the AP. In general, the user and helper can offload data simultaneously in the second time slot \cite{Liu2019}. Our framework and algorithms are applicable for this case as well. However, to simply the frame structure and make the analysis more tractable, we consider that only the helper offloads data in the second time slot. The third slot $t_{\mathrm{d}}\in \mathbb{R}_{\geq 0}$ is used for downloading the computation results. To meet the latency constraint of the user and helper, we have $t_{\mathrm{u}}+t_{\mathrm{h}}+t_{\mathrm{d}}\leq T$ in the offloading process. Generally, the computed results are in small sizes and can be transmitted for a short time.  For simplicity, we assume that the downloading time for the user and helper is negligible, i.e., $t_{\mathrm{d}} \approx 0$. As a result, the latency constraint can be further simplified as $t_{\mathrm{u}}+t_{\mathrm{h}}\leq T$.

Specifically, in the first time slot $t_{\mathrm{u}}$, i.e., the NOMA transmission process, the user transmits a linear superposition of the data to the helper and the AP by allocating transmit power $p_{\mathrm{u,h}}\in \mathbb{R}_{\geq 0}$ and $p_{\mathrm{u,a}}\in \mathbb{R}_{\geq 0}$, respectively. The received signals at the helper and the AP are given by
\begin{align}
	y^{(1)}_{\mathrm{u,h}}&=\sqrt{p_{\mathrm{u,h}}}x_{\mathrm{u,h}}g_{\mathrm{u,h}}+\sqrt{p_{\mathrm{u,a}}}x_{\mathrm{u,a}}g_{\mathrm{u,h}}+z_{\mathrm{h}},\label{eqn:r2}\\
	y^{(1)}_{\mathrm{u,a}}&=\sqrt{p_{\mathrm{u,h}}}x_{\mathrm{u,h}}g_{\mathrm{u,a}}+\sqrt{p_{\mathrm{u,a}}}x_{\mathrm{u,a}}g_{\mathrm{u,a}}+z_{\mathrm{a}},\label{eqn:r3}
\end{align}
where $x_{\mathrm{u,h}}\in \mathbb{C}$ and $x_{\mathrm{u,a}}\in \mathbb{C}$ are the transmitted signal for the helper and the AP, respectively; $g_{\mathrm{u,h}}\in \mathbb{C}$ and $g_{\mathrm{u,a}}\in \mathbb{C}$ denote the channel coefficients from the user to the helper and the AP, respectively. It is assumed that the wireless links experience independent and identically distributed (i.i.d.) block Rayleigh fading. Constants $z_{\mathrm{h}}\in \mathbb{C}$ and $z_{\mathrm{a}}\in \mathbb{C}$ indicate the additive white Gaussian noise (AWGN) at the helper and AP respectively with zero mean and variance $\sigma^{2}\in \mathbb{R}_{\geq 0}$. We suppose that global channel state information (CSI) is available. At the receivers, the helper first decodes the signal of the AP and subtracts it from the received signal using successive interference cancellation (SIC) when decoding its own signal.

Then the achievable rates from the user to the helper and the AP can be respectively expressed as
\begin{align}
	R^{(1)}_{\mathrm{u,h}}&=\log_{2}\left(1+h_{\mathrm{u,h}}p_{\mathrm{u,h}}\right),\label{eqn:r5}\\
	R^{(1)}_{\mathrm{u,a}}&=\log_{2}\left(1+\frac{h_{\mathrm{u,a}}p_{\mathrm{u,a}}}{1+h_{\mathrm{u,a}}p_{\mathrm{u,h}}}\right),\label{eqn:r6}
\end{align}
where $h_{\mathrm{u,h}}:=|g_{\mathrm{u,h}}|^{2}/\sigma^{2}$ and $h_{\mathrm{u,a}}:=|g_{\mathrm{u,a}}|^{2}/\sigma^{2}$ are defined to be the effective channel gain to noise power ratio (CGNR) from the user to the helper and the AP, respectively. 

The corresponding offloaded bits from the user to the helper and the AP are respectively defined to be
\begin{eqnarray}
	&\ell_{\mathrm{u,h}}:=t_{\mathrm{u}}R^{(1)}_{\mathrm{u,h}},\label{eqref:r66} \\
	&\ell_{\mathrm{u,a}}:=t_{\mathrm{u}}R^{(1)}_{\mathrm{u,a}}.\label{eqref:r67}
\end{eqnarray}
The energy consumption of the NOMA-aided offloading over duration time $t_{\mathrm{u}}$ at the user is defined to be
\begin{align}
	E_{\mathrm{u}}^{\rm{off}}:=t_{\mathrm{u}}(p_{\mathrm{u,h}}+p_{\mathrm{u,a}}).
\end{align}

At the second time slot $t_{\mathrm{h}}$, the helper offloads $\ell_{\mathrm{h,a}}\in \mathbb{R}_{\geq 0}$ bits, a part of the input task $L_{\mathrm{h}}\in \mathbb{R}_{\geq 0}$, to the AP with transmit power $p_{\mathrm{h,a}}$. Similarly, let us define $h_{\mathrm{h,a}}:=|g_{\mathrm{h,a}}|^{2}/\sigma^{2}$ to be the effective CGNR from the helper to the AP, where $g_{\mathrm{h,a}}\in \mathbb{C}$ denotes the channel coefficients from the helper to the AP. The achievable rate for task offloading from the helper to the AP is given by
\begin{align}
	R^{(2)}_{\mathrm{h,a}}&=\log_{2}\left(1+h_{\mathrm{h,a}}p_{\mathrm{h,a}}\right).\label{eqn:r7}
\end{align}
Then the offloaded data bits are defined to be
\begin{align}
\ell_{\mathrm{h,a}}:=t_{\mathrm{h}}R^{(2)}_{\mathrm{h,a}}.\label{eqn:data_ha}
\end{align}
The energy consumption of data offloading  over duration time $t_{\mathrm{h}}$ at the helper is defined to be
\begin{align}
	E^{\rm{off}}_{\mathrm{h}}:=t_{\mathrm{h}}p_{\mathrm{h,a}}. \label{eqn:r13}
\end{align}
\subsubsection{Computing model}
Denote the central processing unit (CPU) frequency of the user and helper  as $f_{\mathrm{u}}\in \mathbb{R}_{\geq 0}$ and $f_{\mathrm{h}}\in \mathbb{R}_{\geq 0}$, respectively. As mentioned before, the user has to compute $L_{\mathrm{u}}$ bits input-data, in which $\ell_{\mathrm{u,h}}$ and $\ell_{\mathrm{u,a}}$ bits are offloaded to the helper and the AP respectively, and the rest, i.e., $L_{\mathrm{u}}-\ell_{\mathrm{u,h}}-\ell_{\mathrm{u,a}}$ bits, are computed locally. The energy consumption for local computing at user is defined to be
\begin{align}
E_{\mathrm{u}}^{\rm{loc}}:=(L_{\mathrm{u}}-\ell_{\mathrm{u,h}}-\ell_{\mathrm{u,a}})\kappa f_{\mathrm{u}}^{2},\label{eqn:r8}
\end{align}
where $\kappa\in \mathbb{R}_{\geq 0}$ denotes a constant related to the hardware architecture \cite{Mao2017}. Note that the user's local computing can be performed in the whole duration. The user's local computing latency constraint is defined to be
\begin{align}
\frac{(L_{\mathrm{u}}-\ell_{\mathrm{u,h}}-\ell_{\mathrm{u,a}})C_{\mathrm{u}}}{f_{\mathrm{u}}}\leq T,\label{eqn:r9}
\end{align}
where $C_{\mathrm{u}}\in \mathbb{R}_{\geq 0}$, namely computation intensity, is the number of CPU cycles required for computing 1-bit of input data for the user.
As for the helper, its own $\ell_{\mathrm{h,a}}$ bits are offloaded to the AP at the second slot within duration $t_{\mathrm{h}}$, and $L_{\mathrm{h}}-\ell_{\mathrm{h,a}}+\ell_{\mathrm{u,h}}$ bits are left for local computing, where $\ell_{\mathrm{u,h}}$ bits are received from the user in the first slot with duration $t_{\mathrm{u}}$. Then the energy consumption for local computing at the helper is defined to be
\begin{align}
E_{\mathrm{h}}^{\rm{loc}}:=(L_{\mathrm{h}}-\ell_{\mathrm{h,a}}+\ell_{\mathrm{u,h}})\kappa f_{\mathrm{h}}^{2}.\label{eqn:r10}
\end{align}

When the user offloads task (i.e., the NOMA transmission process), the helper first computes its own $L_{\mathrm{h}}-\ell_{\mathrm{h,a}}$ bits and receives $\ell_{\mathrm{u,h}}$ bits offloaded by the user simultaneously. After computing $L_{\mathrm{h}}-\ell_{\mathrm{h,a}}$ bits, if the helper has already received $\ell_{\mathrm{u,h}}$ bits, i.e., $(L_{\mathrm{h}}-\ell_{\mathrm{h,a}})C_{\mathrm{u}}/f_{\mathrm{h}}\geq t_{\mathrm{u}}$, it turns to execute the task $\ell_{\mathrm{u,h}}$ offloaded by the user. Otherwise, it keeps receiving data from the user until $\ell_{\mathrm{u,h}}$ bits are all received. Therefore, we define the helper's latency constraint to be
\begin{align}
\max\left(t_{\mathrm{u}},\frac{(L_{\mathrm{h}}-\ell_{\mathrm{h,a}})C_{\mathrm{u}}}{f_{\mathrm{h}}}\right)+\frac{\ell_{\mathrm{u,h}}C_{\mathrm{u}}}{f_{\mathrm{h}}} \leq T.\label{eqn:r11}
\end{align}

\subsection{Problem Formulation}

\subsubsection{Energy Consumption Minimization Problem}
Here we consider the energy consumption as the system performance metric. The total energy consumption for the user and the helper are defined to be $E^{\rm{off}}_{\mathrm{u}}+E^{\rm{loc}}_{\mathrm{u}}$ and $E^{\rm{off}}_{\mathrm{h}}+E^{\rm{loc}}_{\mathrm{h}}$, respectively. With the objective of minimizing the total energy consumption of the user and helper, subjected to the common latency constraint, the optimization problem is defined to be 
\begin{align}
 {\text{(P1)}:}\min_{\bm{p},\bm{\ell},\bm{t}}&\quad w_{\mathrm{u}}\left(E^{\rm{off}}_{\mathrm{u}}
 +E^{\rm{loc}}_{\mathrm{u}}\right)
 +w_{\mathrm{h}}\left(E^{\rm{off}}_{\mathrm{h}}
 +E^{\rm{loc}}_{\mathrm{h}}\right)\nonumber \\
{\rm s.t.} &\quad\frac{(L_{\mathrm{u}}-\ell_{\mathrm{u,h}}-\ell_{\mathrm{u,a}})C_{\mathrm{u}}}{f_{\mathrm{u}}}\leq T,\label{eqn:r14}\\
  & \quad\max\left(t_{\mathrm{u}},\frac{(L_{\mathrm{h}}-\ell_{\mathrm{h,a}})C_{\mathrm{u}}}{f_{\mathrm{h}}}\right)+\frac{\ell_{\mathrm{u,h}}C_{\mathrm{u}}}{f_{\mathrm{h}}} \leq T,\label{eqn:r15}\\
  &\quad t_{\mathrm{u}}+t_{\mathrm{h}}\leq T\label{eqn:r16},\\
  &\quad\ell_{\mathrm{u,a}}C_{\mathrm{u}} + \ell_{\mathrm{h,a}}C_{\mathrm{h}} \leq F,\label{AP_com_capacity} \\
  &\quad \ell_{\mathrm{h,a}}\leq L_{\mathrm{h}},\label{eqn:r60}\\
  &\quad \bm{p}\in \mathbb{R}^{3}_{\geq 0},
  \bm{\ell}\in \mathbb{R}^{3}_{\geq 0},
  \bm{t}\in \mathbb{R}^{2}_{\geq 0},
  \end{align}
where $\bm{p}:=\{p_{\mathrm{u,h}},p_{\mathrm{u,a}},p_{\mathrm{h,a}}\}\in \mathbb{R}^{3}_{\geq 0}$, $\bm{\ell}:=\{\ell_{\mathrm{u,h}},\ell_{\mathrm{u,a}},\ell_{\mathrm{h,a}}\}\in\mathbb{R}^{3}_{\geq 0}$, and $\bm{t}:=\{t_{\mathrm{u}},t_{\mathrm{h}}\}\in \mathbb{R}^{2}_{\geq 0}$. The constants $w_{\mathrm{u}}\in \mathbb{R}_{\geq 0}$ and $w_{\mathrm{h}}\in \mathbb{R}_{\geq 0}$ are weighted factors decided by the system. Note that $w_{\mathrm{u}}$ and $w_{\mathrm{h}}$ respectively account for the priorities of the user and the helper. \eqref{eqn:r14} and \eqref{eqn:r15} denote the latency constraints of the local computing at the user and the helper, respectively. \eqref{eqn:r16} is the total offloading time constraint of the user and helper. \eqref{AP_com_capacity} denotes the computation capacity of the AP, where $C_{\mathrm{h}}$ denotes the number of CPU cycles for computing 1-bit of input-data of the helper and $F$ is the server's available computational capacity. \eqref{eqn:r60} ensures that the offloading data of the helper cannot exceed its input-data size.
\subsubsection{Offloading Data Maximization Problem}
We also consider the system's maximum sum of offloading data. The offloading data in the user and helper are $\ell_{\mathrm{u,h}}+\ell_{\mathrm{u,a}}$ and $\ell_{\mathrm{h,a}}$, respectively. Denote $\bar{P}_{\mathrm{u}}\in \mathbb{R}_{\geq 0}$ and $\bar{P}_{\mathrm{h}}\in \mathbb{R}_{\geq 0}$ as the maximum transmit power of the user and helper, respectively. Then the offloading data maximization problem is defined to be
\begin{align}
 {\text{(P2)}:}\max_{\bm{p}, \bm{\ell},\bm{t}}&\quad w_{\mathrm{u}}(\ell_{\mathrm{u,h}}+\ell_{\mathrm{u,a}})+w_{\mathrm{h}}\ell_{\mathrm{h,a}}\nonumber \\
{\rm s.t.} &\quad p_{\mathrm{u,h}}+p_{\mathrm{u,a}}\leq\bar{P}_{\mathrm{u}},\label{eqn:r17}\\
  &\quad  p_{\mathrm{h,a}}\leq \bar{P}_{\mathrm{h}},\label{max_trasmit_power_helper}\\
  &\quad  \kappa f^{2}_{\mathrm{h}} \ell_{\mathrm{u,h}} \leq E'_{\mathrm{h}},\label{cons_available_energy_helper}
\\
  &\quad  t_{\mathrm{u}}+\frac{\ell_{\mathrm{u,h}}C_{\mathrm{u}}}{f_{\mathrm{h}}} \leq T,\label{eqn:r18}\\
  &\quad  t_{\mathrm{u}}+t_{\mathrm{h}}\leq T,\label{eqn:r19}\\
  &\quad \bm{p}\in \mathbb{R}^{3}_{\geq 0},
  \bm{\ell}\in \mathbb{R}^{3}_{\geq 0},
  \bm{t}\in \mathbb{R}^{2}_{\geq 0},
\end{align}
where \eqref{eqn:r17} and \eqref{max_trasmit_power_helper} denote the maximum transmit power constraints of the user and the helper, respectively. \eqref{cons_available_energy_helper} denotes the limit of available energy at the helper for processing the offloaded data, where $E'_{\mathrm{h}}\in \mathbb{R}_{\geq 0}$ is a fixed number denoting helper's available energy for processing user's offloading data. \eqref{eqn:r18} is the local computing latency constraint of the helper, and \eqref{eqn:r19} is the total offloading time constraint of the user and helper. Note that the issue of local computing at the user and helper is not necessarily considered because the goal of Problem (P2) is to maximize the offloading data. Intuitively, $p_{\mathrm{h,a}}=\bar{P}_{\mathrm{h}}$ holds for offloading data maximization. 

\section{Optimal Solution for Energy Consumption Minimization Problem}\label{se2}
In this section, we address the energy consumption minimization Problem (P1). As the problem is non-convex, we first transform it into a convex problem by some mathematical methods. Then we develop an efficient algorithm to obtain the globally optimal solution.
\subsection{Problem Transformation}
With  \eqref{eqref:r66} and \eqref{eqn:data_ha}, we can rewrite $p_{\mathrm{u,h}}$ and $p_{\mathrm{h,a}}$ as
\begin{align}
p_{\mathrm{u,h}}=\frac{1}{h_{\mathrm{u,h}}}f_{1}\left(\frac{\ell_{\mathrm{u,h}}}{t_{\mathrm{u}}}\right),\label{eqn:r20}\\
p_{\mathrm{h,a}}=\frac{1}{h_{\mathrm{h,a}}}f_{1}\left(\frac{\ell_{\mathrm{h,a}}}{t_{\mathrm{h}}}\right),\label{eqn:r21}
\end{align}
where $f_{1}$ represents the mapping $f_{1}:\mathbb{R}\rightarrow \mathbb{R},  x\mapsto 2^{x}-1$ for all $x\in \mathbb{R}$. Also with \eqref{eqref:r67}, the transmit power $p_{\mathrm{u,a}}$ can be rewritten as
\begin{align}
p_{\mathrm{u,a}}=&
\frac{1}{h_{\mathrm{u,h}}}f_{1}\left(\frac{\ell_{\mathrm{u,a}}
+\ell_{\mathrm{u,h}}}{t_{\mathrm{u}}}\right)
-\frac{1}{h_{\mathrm{u,h}}}f_{1}
\left(\frac{\ell_{\mathrm{u,h}}}
{t_{\mathrm{u}}}\right)\nonumber\\&
+\left(\frac{1}{h_{\mathrm{u,a}}}
-\frac{1}{h_{\mathrm{u,h}}}\right)
f_{1}\left(\frac{\ell_{\mathrm{u,a}}}
{t_{\mathrm{u}}}\right).\label{eqn:r22}
\end{align}
 Then we find that objective function of Problem (P1) equals $f_{2}$, which is defined to be the mapping
\begin{align}
f_{2}:\mathbb{R}^{3}_{\geq 0}\times 
\mathbb{R}^{2}_{\geq 0}\rightarrow\ & 
\mathbb{R},\nonumber\\ 
(\bm{\ell},\bm{t})
\mapsto\
&\frac{w_{\mathrm{u}}t_{\mathrm{u}}}
{h_{\mathrm{u,h}}}f_{1}\left( \frac{\ell_{\mathrm{u,h}} + \ell_{\mathrm{u,a}}}
{t_{\mathrm{u}}} \right)\nonumber \\&
+w_{\mathrm{u}}\kappa f_{\mathrm{u}}^{2}(L_{\mathrm{u}}-\ell_{\mathrm{u,h}}-\ell_{\mathrm{u,a}})
\nonumber\\&+\left(\frac{w_{\mathrm{u}}t_{\mathrm{u}}}{h_{\mathrm{u,a}}}-\frac{w_{\mathrm{u}}t_{\mathrm{u}}}{h_{\mathrm{u,h}}}\right)f_{1}\left(\frac{\ell_{\mathrm{u,a}}}{t_{\mathrm{u}}}\right)
\nonumber \\&+\frac{w_{\mathrm{h}}t_{\mathrm{h}}}{h_{\mathrm{h,a}}}f_{1}\left(\frac{\ell_{\mathrm{h,a}}}{t_{\mathrm{h}}}\right)\nonumber\\&+w_{\mathrm{h}}\kappa f_{\mathrm{h}}^{2}(L_{\mathrm{h}}-\ell_{\mathrm{h,a}}+\ell_{\mathrm{u,h}}),\label{eqn:r23}
\end{align}
where $\mathbb{R}^{3}_{\geq 0}\times \mathbb{R}^{2}_{\geq 0}$ is the Cartesian product of the sets $\mathbb{R}^{3}_{\geq 0}$ and $\mathbb{R}^{2}_{\geq 0}$.
\begin{lemma}\label{Lm5}
Suppose $h_{\mathrm{u,a}}\leq h_{\mathrm{u,h}}, f_{2}(\bm{\ell},\bm{t})$ is jointly convex with respect to $\bm{\ell}$ and $\bm{t}$ over Problem (P1)'s feasible set.
\end{lemma}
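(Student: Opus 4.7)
The plan is to express $f_2$ as a nonnegative linear combination of convex building blocks. The decisive observation is that each nonlinear summand has the structure $t \cdot g(x/t)$, which is the \emph{perspective} of the scalar function $g(x) = 2^x - 1 = f_1(x)$. Since $g$ is convex on $\mathbb{R}$ (its second derivative equals $(\ln 2)^2 \cdot 2^x > 0$), the perspective operation yields a function that is jointly convex in $(x,t)$ on $\mathbb{R} \times \mathbb{R}_{>0}$ (a standard result from convex analysis; see Boyd--Vandenberghe). Composing with the affine map $(\ell_{u,h}, \ell_{u,a}, t_u) \mapsto (\ell_{u,h}+\ell_{u,a}, t_u)$ preserves joint convexity, so the first summand of $f_2$ is jointly convex in $(\bm{\ell}, \bm{t})$; the same argument applies to the $\ell_{u,a}/t_u$, $\ell_{h,a}/t_h$ perspective terms.

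Next I would handle the coefficient signs. The multiplicative constants $\tfrac{w_u}{h_{u,h}}$, $\tfrac{w_h}{h_{h,a}}$ are nonnegative by assumption, so the corresponding perspective terms contribute convexly. The delicate coefficient is $w_u\!\left(\tfrac{1}{h_{u,a}} - \tfrac{1}{h_{u,h}}\right)$ in front of $t_u f_1(\ell_{u,a}/t_u)$: this is precisely where the hypothesis $h_{u,a} \le h_{u,h}$ is used, since it forces $\tfrac{1}{h_{u,a}} - \tfrac{1}{h_{u,h}} \ge 0$ and thus keeps this term a nonnegative multiple of a convex function. The remaining two summands $w_u \kappa f_u^2 (L_u - \ell_{u,h} - \ell_{u,a})$ and $w_h \kappa f_h^2 (L_h - \ell_{h,a} + \ell_{u,h})$ are affine in $(\bm{\ell}, \bm{t})$, hence convex. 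Summing all the pieces gives joint convexity of $f_2$.

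The main obstacle is really bookkeeping around the domain: the perspective convexity of $t \cdot g(x/t)$ only holds for $t > 0$, so I would note that on the feasible set of Problem (P1) we may restrict attention to $t_u, t_h > 0$ (the degenerate cases $t_u = 0$ or $t_h = 0$ force the offloaded bits through that slot to be zero and can be handled as boundary limits, or absorbed by defining $f_2$ through its continuous extension at zero since $\lim_{t\downarrow 0} t f_1(x/t) = 0$ when $x=0$ and $+\infty$ otherwise). Once this is in place, the proof reduces to assembling the three perspective terms plus two affine terms with nonnegative weights and invoking that nonnegative combinations of convex functions are convex.
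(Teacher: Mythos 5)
Your proof is correct and follows essentially the same route as the paper's Appendix A: decompose $f_{2}$ into perspective terms $t\,f_{1}(x/t)$ plus affine terms, invoke the fact that the perspective operation preserves convexity, and use $h_{\mathrm{u,a}}\leq h_{\mathrm{u,h}}$ to ensure all coefficients are nonnegative. Your additional remark about handling the boundary cases $t_{\mathrm{u}}=0$ or $t_{\mathrm{h}}=0$ is a point of care that the paper's proof glosses over, but it does not change the substance of the argument.
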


\begin{proof}
	Please refer to Appendix \ref{AP5}.
\end{proof}

With Lemma \ref{Lm5}, we can further prove that Problem (P1) is a convex problem and thus can be solved by convex optimization methods.

\subsection{Finding Optimal Solution}
The partial derivative of $f_{2}(\bm{\ell},\bm{t})$ with respect to $t_{\mathrm{h}}$ is
\begin{align}
	\frac{\partial f_{2}}{\partial t_{\mathrm{h}}}=\frac{w_{\mathrm{h}}}{h_{\mathrm{h,a}}}\left(\left(1-\frac{\ell_{\mathrm{h,a}}\ln2}{t_{\mathrm{h}}}\right) 2^{\frac{\ell_{\mathrm{h,a}}}{t_{\mathrm{h}}}}-1\right).
\end{align}
Then we introduce the following lemma to further simplify Problem (P1).
\begin{lemma}\label{Lm1}
Inequality $\bigl(1-x\ln2\bigr) 2^{x}-1\leq 0$ is always satisfied for $x\geq 0$.
\end{lemma}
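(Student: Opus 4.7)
The plan is to prove the inequality by a simple monotonicity argument. Define the single-variable function
\begin{equation*}
g(x) := \bigl(1-x\ln 2\bigr)2^{x} - 1, \qquad x \geq 0.
\end{equation*}
First I would evaluate $g$ at the endpoint $x=0$ and observe that $g(0) = (1-0)\cdot 1 - 1 = 0$, so the inequality holds with equality there. It then suffices to show that $g$ is non-increasing on $[0,\infty)$.

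Next I would differentiate. Using the product rule together with $\tfrac{d}{dx} 2^{x} = 2^{x}\ln 2$, I get
\begin{equation*}
g'(x) = -\ln 2 \cdot 2^{x} + \bigl(1 - x\ln 2\bigr)\,2^{x}\ln 2 = -x(\ln 2)^{2}\, 2^{x}.
\end{equation*}
Because $2^{x} > 0$ and $(\ln 2)^{2} > 0$, for every $x \geq 0$ we have $g'(x) \leq 0$, with equality only at $x = 0$. Hence $g$ is monotonically non-increasing on $[0,\infty)$, which together with $g(0) = 0$ immediately yields $g(x) \leq 0$ for all $x \geq 0$, i.e.\ the claimed inequality.

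I do not anticipate any real obstacle in this proof: the only point that requires a small amount of care is collecting the terms in $g'(x)$ so that the algebraic cancellation between $-\ln 2 \cdot 2^{x}$ and the $+\ln 2 \cdot 2^{x}$ piece coming from $(1-x\ln 2)\,2^{x}\ln 2$ is visible, after which the factor $-x(\ln 2)^{2} 2^{x}$ is manifestly non-positive on $[0,\infty)$. No convexity machinery or auxiliary bounds are needed.
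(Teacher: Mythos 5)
Your proof is correct and follows essentially the same route as the paper's Appendix B: both define the auxiliary function, note it vanishes at $x=0$, compute the derivative $-x(\ln 2)^{2}2^{x}\leq 0$, and conclude by monotonicity. No issues.
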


\begin{proof}
	Please refer to Appendix \ref{AP1}.
\end{proof}

With Lemma \ref{Lm1}, we can draw a conclusion that $\frac{\partial f_{2}}{\partial t_{\mathrm{h}}}\leq 0$ is satisfied when $\ell_{\mathrm{h,a}}/t_{\mathrm{h}}\geq 0$, i.e., $\ell_{\mathrm{h,a}}\geq 0$ and $t_{\mathrm{h}}\geq 0$. As a result, to minimize the value of the objective function $f_{2}(\bm{\ell},\bm{t})$, constraint \eqref{eqn:r16} needs to be satisfied with equality, which means that the whole time has to be fully utilized. Otherwise, we can still further improve the value of $f_{2}$ by increasing $t_{\mathrm{h}}$ until the equality is activated.  This property is consistent with the intuition that with fixed offloading bits $\ell_{\mathrm{h,a}}$, the longer transmission time $t_{\mathrm{h}}$ the helper uses, the less offloading energy $E_{\mathrm{h}}^{\rm{off}}$ it consumes.

As a result, we can represent $t_{\mathrm{u}}$ with $\alpha T$ and $t_{\mathrm{h}}$ with $(1-\alpha)T$ based on Lemma \ref{Lm1}, where $\alpha\in \mathbb{R}_{\geq 0}$ denotes the proportion of the user's transmission time in the whole time $T$. Here we solve Problem (P1) with fixed $\alpha$ at the first step and then obtain the optimal value of $\alpha$ by numerical search within $0\leq\alpha\leq1$.

Note that the constraint \eqref{eqn:r15} in Problem (P1) is equivalent to the constraints $\alpha T+\frac{\ell_{\mathrm{u,h}}C_{\mathrm{u}}}{f_{\mathrm{h}}}\leq T$ and $\frac{(L_{\mathrm{h}}-\ell_{\mathrm{h,a}}+\ell_{\mathrm{u,h}})C_{\mathrm{u}}}{f_{\mathrm{h}}}\leq T$. Therefore, given $\alpha$, Problem (P1) can be reformulated as
\begin{eqnarray}
{\text{ (P1')}:} \min_{\bm{\ell}\geq \bm{0}}&&f_{2}(\bm{\ell})\nonumber \\
{\rm s.t.}&&(L_{\mathrm{u}}-\ell_{\mathrm{u,h}}-\ell_{\mathrm{u,a}})C_{\mathrm{u}}\leq f_{\mathrm{u}}T,\label{eqn:r25}\\
  && (L_{\mathrm{h}}-\ell_{\mathrm{h,a}}+\ell_{\mathrm{u,h}})C_{\mathrm{u}}\leq f_{\mathrm{h}}T,\label{eqn:r26}\\
   && \alpha T+\frac{\ell_{\mathrm{u,h}}C_{\mathrm{u}}}{f_{\mathrm{h}}} \leq T\label{eqn:r27}, \\
   &&\ell_{\mathrm{u,a}}C_{\mathrm{u}} + \ell_{\mathrm{h,a}}C_{\mathrm{h}} \leq F,\\
    && \ell_{\mathrm{h,a}}\leq L_{\mathrm{h}},\label{eqn:r61}
  \end{eqnarray}
where $f_{2}(\bm{\ell})$ can be obtained by setting $t_{\mathrm{u}}=\alpha T$ and $t_{\mathrm{h}}=(1-\alpha)T$ in $f_{2}(\bm{\ell},\bm{t})$. We construct the partial Lagrangian function of the form
\begin{align}
L_1(\bm{\ell}, \bm{\lambda})=&\frac{w_{\mathrm{u}}\alpha T}{h_{\mathrm{u,h}}}f_{1}\left(\frac{\ell_{\mathrm{u,h}}+\ell_{\mathrm{u,a}}}{\alpha T}\right)
+\frac{w_{\mathrm{h}}(1-\alpha) T}{h_{\mathrm{h,a}}}\nonumber\\&\cdot f_{1}\left(\frac{\ell_{\mathrm{h,a}}}{(1-\alpha)T}\right)
+w_{\mathrm{u}}\kappa f_{\mathrm{u}}^{2}(L_{\mathrm{u}}-\ell_{\mathrm{u,h}}-\ell_{\mathrm{u,a}})\nonumber\\&+\left(\frac{w_{\mathrm{u}}\alpha T}{h_{\mathrm{u,a}}}
-\frac{w_{\mathrm{u}}\alpha T}{h_{\mathrm{u,h}}}\right)f_{1}\left(\frac{\ell_{\mathrm{u,a}}}{\alpha T}\right)
\nonumber\\&+w_{\mathrm{h}}\kappa f_{\mathrm{h}}^{2}(L_{\mathrm{h}}-\ell_{\mathrm{h,a}}+\ell_{\mathrm{u,h}})
\nonumber\\&+\lambda_{1}((L_{\mathrm{u}}-\ell_{\mathrm{u,h}}-\ell_{\mathrm{u,a}})C_{\mathrm{u}}-f_{\mathrm{u}}T)
\nonumber\\&+\lambda_{2}((L_{\mathrm{h}}-\ell_{\mathrm{h,a}}+\ell_{\mathrm{u,h}})C_{\mathrm{u}}-f_{\mathrm{h}}T)
\nonumber\\&+\lambda_{3}(\ell_{\mathrm{u,a}}C_{\mathrm{u}} + \ell_{\mathrm{h,a}}C_{\mathrm{h}} -F),\label{eqn:r28}
\end{align}
where $\bm{\lambda}:=\{\lambda_{1},\lambda_{2},\lambda_{3}\}\in \mathbb{R}^{3}_{\geq 0}$ is the collection of the Lagrangian multipliers associated with constraints \eqref{eqn:r25} and \eqref{eqn:r26}, respectively. Then the dual function is defined to be
\begin{eqnarray}\label{eqn:r29}
g_1(\bm{\lambda}):=\begin{cases}
\min_{\bm{\ell}}\quad L_1(\bm{\ell},\bm{\lambda})\nonumber\\
{\rm s.t.}\quad
\alpha T+\frac{\ell_{\mathrm{u,h}}C_{\mathrm{u}}}{f_{\mathrm{h}}} \leq T,
    \ell_{\mathrm{h,a}}\leq L_{\mathrm{h}},
    \bm{\ell}\geq \bm{0} .
\end{cases}
\end{eqnarray}
Thereby the dual problem is 
\begin{align}\label{eqn:r30}
\max_{\bm{\lambda}\geq\bm{0} }g_1(\bm{\lambda}).
\end{align}

Since Problem (P1') is convex with given $\alpha$ and satisfies the Slater's condition, there is zero duality gap between Problem (P1') and problem \eqref{eqn:r30}. In the following, we firstly solve problem \eqref{eqn:r29} and then update the Lagrangian multipliers $\bm{\lambda}$ until they converge to optimal values. Then we obtain the optimal solution of Problem (P1') which is denoted by the vector $\bm{\ell}^*:=\{\ell_{\mathrm{u,h}}^{*},\ell_{\mathrm{u,a}}^{*},\ell_{\mathrm{h,a}}^{*}\}$.

For given Lagrangian variables $\bm{\lambda}$, the partial derivative of $L_1(\bm{\ell}, \bm{\lambda})$ with respect to $\ell_{\mathrm{u,h}}$ is
\begin{align}\label{eqn:r31}
\frac{\partial L_1}{\partial \ell_{\mathrm{u,h}}}=&\frac{w_{\mathrm{u}}\ln2}{h_{\mathrm{u,h}}}2^{\frac{\ell_{\mathrm{u,h}}+\ell_{\mathrm{u,a}}}{\alpha T}} - w_{\mathrm{u}}\kappa f_{\mathrm{u}}^{2} + w_{\mathrm{h}}\kappa f_{\mathrm{h}}^{2}-\lambda_{1}C_{\mathrm{u}} \nonumber\\&+ \lambda_{2}C_{\mathrm{u}}.
\end{align}
The optimal value of $\ell_{\mathrm{u,h}}+\ell_{\mathrm{u,a}}$, can be obtained by solving $\frac{\partial L_1}{\partial \ell_{\mathrm{u,h}}}=0$. We have
\begin{align}\label{eqn:r32}
\ell_{\mathrm{u,h}}^*+\ell_{\mathrm{u,a}}^*=\left\{\begin{array}{ll}
0&A \leq 1,\\
\alpha T\log_{2}A
&
\text{otherwise}.
\end{array}\right.
\end{align}
Here $A:=(w_{\mathrm{u}}\kappa f_{\mathrm{u}}^{2}-w_{\mathrm{h}}\kappa f_{\mathrm{h}}^{2}+\lambda_{1}C_{\mathrm{u}}-\lambda_{2}C_{\mathrm{u}})h_{\mathrm{u,h}}/(w_{\mathrm{u}}\ln2)\in \mathbb{R}$. The amount of the data offloaded by the user increases with $w_{\mathrm{u}}\kappa f_{\mathrm{u}}^{2}-w_{\mathrm{h}}\kappa f_{\mathrm{h}}^{2}$. Note that because of the nonnegative constraints $\ell_{\mathrm{u,h}}\geq0$ and $\ell_{\mathrm{u,a}}\geq0$, when $A\leq1$, i.e., $\ell_{\mathrm{u,h}}^*+\ell_{\mathrm{u,a}}^*=0$, we have $\ell_{\mathrm{u,h}}^*=0$ and $\ell_{\mathrm{u,a}}^*=0$. On the other hand, in terms of the case $A>1$, we need to solve $\ell_{\mathrm{u,h}}^*$ by computing the partial derivative of $L_1$ with respect to $\ell_{\mathrm{u,a}}$. We have
\begin{align}\label{eqn:r33}
\frac{\partial L_1}{\partial \ell_{\mathrm{u,a}}}=&
w_{\mathrm{u}}2^{\frac{\ell_{\mathrm{u,a}}}{\alpha T}}\ln2
\left( \frac{1}{h_{\mathrm{u,h}}}2^{\frac{\ell_{\mathrm{u,h}}}{\alpha T}}
+\frac{1}{h_{\mathrm{u,a}}} 
- \frac{1}{h_{\mathrm{u,h}}} \right)\nonumber\\&
-w_{\mathrm{u}}\kappa f_{\mathrm{u}}^{2}+(\lambda_{3}-\lambda_{1})C_{\mathrm{u}}.
\end{align}
With $\ell_{\mathrm{u,h}}^*+\ell_{\mathrm{u,a}}^*=\alpha T\log_{2}A$, the partial derivative $\frac{\partial L_1}{\partial \ell_{\mathrm{u,a}}}$ can be further simplified as
\begin{align}\label{eqn:r34}
\frac{\partial L_1}{\partial \ell_{\mathrm{u,a}}}=&w_{\mathrm{u}}2^{\frac{\ell_{\mathrm{u,a}}}{\alpha T}}
\ln2\left(\frac{1}{h_{\mathrm{u,a}}}-\frac{1}{h_{\mathrm{u,h}}}\right)
-w_{\mathrm{h}}\kappa f_{\mathrm{h}}^{2}\nonumber\\&+(\lambda_{3}-\lambda_{2})C_{\mathrm{u}}.
\end{align}
Then we can solve $\ell_{\mathrm{u,a}}^*$ by setting  $\frac{\partial L_1}{\partial \ell_{\mathrm{u,a}}}$ equal to zero as well as considering the nonnegative constraint $\ell_{\mathrm{u,a}}\geq0$. One can verify that $\ell_{\mathrm{u,a}}^*$ equals
\begin{align}\label{eqn:r35}
\ell_{\mathrm{u,a}}^*=\left\{\begin{array}{ll}
0& A \leq 1,\\
\alpha T\bigl[\log_{2}C\bigr]^+
&
\text{otherwise},
\end{array}\right.
\end{align}
where $C:=(w_{\mathrm{h}}\kappa f_{\mathrm{h}}^{2}+(\lambda_{2}-\lambda_{3})C_{\mathrm{u}})h_{\mathrm{u,h}}h_{\mathrm{u,a}}/(w_{\mathrm{u}}\ln2(h_{\mathrm{u,h}}-h_{\mathrm{u,a}}))\in \mathbb{R}$ and $[x]^{+}:=\max(x,0)$. The result \eqref{eqn:r35} shows that when $w_{\mathrm{h}}\kappa f_{\mathrm{h}}^{2}$ is sufficiently large, which means that the helper is more energy-consuming for local computing, the user prefers to offload more bits to the AP rather than the helper.

Subtracting $\ell_{\mathrm{u,a}}^*$ from $\ell_{\mathrm{u,h}}^*+\ell_{\mathrm{u,a}}^*$ as well as considering the constraint \eqref{eqn:r27}, we have
\begin{align}\label{eqn:r36}
\ell_{\mathrm{u,h}}^*=\left\{\begin{array}{ll}
\min(\alpha T\bigl[\log_{2}A\bigr]^{+},(1-\alpha)Tf_{\mathrm{h}})&C \leq 1,\\
\min(\alpha T\bigl[\log_{2}\frac{A}{C}\bigr]^{+},(1-\alpha)Tf_{\mathrm{h}})
&
\text{otherwise}.
\end{array}\right.
\end{align}
Note that $A/C=(w_{\mathrm{u}}\kappa f_{\mathrm{u}}^{2}-w_{\mathrm{h}}\kappa f_{\mathrm{h}}^{2}+\lambda_{1}C_{\mathrm{u}}-\lambda_{2}C_{\mathrm{u}})(h_{\mathrm{u,h}}-h_{\mathrm{u,a}}))/(h_{\mathrm{u,a}}(w_{\mathrm{h}}\kappa f_{\mathrm{h}}^{2}+(\lambda_{2}-\lambda_{3})C_{\mathrm{u}}))$. We observe that $\ell_{\mathrm{u,h}}^*$ is an increasing function with respect to $w_{\mathrm{u}}\kappa f_{\mathrm{u}}^{2}$, which implies that, in order to reduce the energy consumption, the user needs to offload more bits to the helper if its per-bit energy consumption of local computing is higher than that of the helper. Otherwise the user prefers to locally compute more bits. Furthermore, from \eqref{eqn:r35} and \eqref{eqn:r36}, it is shown that when the gap between $h_{\mathrm{u,a}}$ and $h_{\mathrm{u,h}}$ widens, $\ell_{\mathrm{u,a}}$ declines and $\ell_{\mathrm{u,h}}$ increases.  This is consistent with the intuition that more bits are offloaded in the good channel.

We can obtain
$\ell_{\mathrm{h,a}}^*$ by setting $\frac{\partial L_1}{\partial \ell_{\mathrm{h,a}}}$ equal to zero as well as considering the nonnegative constraint $\ell_{\mathrm{h,a}}\geq 0$ and constraint \eqref{eqn:r61}, which is given by
\begin{align}\label{eqn:r38}
\ell_{\mathrm{h,a}}^*=&\min\biggl(L_{\mathrm{h}}, \nonumber\\&(1-\alpha)T\left(\log_{2}
\frac{h_{\mathrm{h,a}}(w_{\mathrm{h}}\kappa f_{\mathrm{h}}^{2}
+\lambda_{2}C_{\mathrm{u}}-\lambda_{3}C_{\mathrm{h}})}
{w_{\mathrm{u}}\ln2}\right)^{+}\biggr).
\end{align}

After solving problem \eqref{eqn:r29} with given $\bm{\lambda}$, we obtain $\bm{\lambda}^*$ which is the solution of the maximization problem \eqref{eqn:r30}. Because of the convexity of problem \eqref{eqn:r30}, we adopt the ellipsoid method to update $\bm{\lambda}$ until the elements of $\bm{\lambda}$ converge the optimal values. The subgradients used for the ellipsoid method are provided as
\begin{align}
&\Delta \lambda_{1}= (L_{\mathrm{u}}-\ell_{\mathrm{u,h}}^*-\ell_{\mathrm{u,a}}^*)C_{\mathrm{u}}-f_{\mathrm{u}}T,\label{eqn:r39}\\
&\Delta \lambda_{2}=(L_{\mathrm{h}}-\ell_{\mathrm{h,a}}^*+\ell_{\mathrm{u,h}}^*)C_{\mathrm{u}}-f_{\mathrm{h}}T,\label{eqn:r40}\\
&\Delta \lambda_{3}=\ell_{\mathrm{u,a}}^{*}C_{\mathrm{u}} + \ell_{\mathrm{h,a}}^{*}C_{\mathrm{h}} - F,\label{eqn:subgradiant_lam3}
\end{align}
where $\ell_{\mathrm{u,a}}^*$, $\ell_{\mathrm{u,h}}^*$, and $\ell_{\mathrm{h,a}}^*$ are solved in \eqref{eqn:r35}, \eqref{eqn:r36}, and  \eqref{eqn:r38}, respectively. Because of the zero duality gap, the solution of Problem (P1') comes out when $\bm{\lambda^*}$ is achieved.

Finally, we determine the optimal solution of $\alpha$, denoted as $\alpha^{*}$, which can be obtained through numerical search within $0\leq\alpha\leq1$. Denote $f_{2}(\bm{\ell}^*)$ as the optimal value of Problem (P1') with given $\alpha$, we have
\begin{align}\label{eqn:r41}
&\alpha^*=\arg\min_{\alpha} f_{2}(\bm{\ell}^*),\ \mathrm{s.t.} \quad0\leq\alpha\leq1.
\end{align}
 Note that search in \eqref{eqn:r41} is using the exhaustive   search for obtaining optimal $\alpha$. That is, $\alpha$ is divided into sufficiently small intervals within $[0,1]$ and we peak one that minimizes $f_{2}(\bm{\ell}^*)$ in \eqref{eqn:r41}.

The whole algorithm solving Problem (P1) optimally is summarized in Algorithm \ref{alg:A1}. The complexity of Algorithm \ref{alg:A1} is evaluated as follows. The complexity of the ellipsoid method is $\mathcal{O}(N^{2})$, where $N$ is the number of dual variables and $N=3$ in this paper. The complexity for obtaining $\alpha$ is $K$ where $K$ is the resolution of the numerical search. Thus the total complexity of Algorithm \ref{alg:A1} is $\mathcal{O}(9K)$.

\begin{algorithm}[tb]
\caption{Optimal algorithm for Problem (P1)  }\label{alg:A1}

\begin{algorithmic}[1]
\STATE Initialize $\alpha$ and $\bm{\lambda}$.
\REPEAT
\STATE Compute $\ell_{\mathrm{u,a}}$, $\ell_{\mathrm{u,h}}$, and  $\ell_{\mathrm{h,a}}$ using \eqref{eqn:r35}, \eqref{eqn:r36}, and  \eqref{eqn:r38}, respectively.
\STATE Update $\bm{\lambda}$ by the ellipsoid method using subgradients \eqref{eqn:r39}, \eqref{eqn:r40}, and \eqref{eqn:subgradiant_lam3}.
\UNTIL{$\bm{\lambda}$ converges to a prescribed accuracy.}
\STATE Obtain $\alpha^{*}$ by \eqref{eqn:r41}.
\STATE Obtain $t_{\mathrm{u}}^{*}=\alpha^{*}T$ and $t_{\mathrm{h}}^{*}=(1-\alpha^{*})T$.
\end{algorithmic}
\end{algorithm}

\subsection{Special Case}
We consider a special case of $L_{\mathrm{h}}=0$. In this case, the helper has no task to execute and just helps computing the bits offloaded by the user, i.e., $\ell_{\mathrm{h,a}}=0$ and $t_{\mathrm{h}}=0$. The	energy consumption minimization problem can be simplified as
\begin{eqnarray}
\min_{\ell_{\mathrm{u,a}},\ell_{\mathrm{u,h}},t_{\mathrm{u}}}&&f_{4}(\ell_{\mathrm{u,a}},\ell_{\mathrm{u,h}},t_{\mathrm{u}})\nonumber \\
{\rm s.t.}&&\frac{(L_{\mathrm{u}}-\ell_{\mathrm{u,h}}-\ell_{\mathrm{u,a}})C_{\mathrm{u}}}{f_{\mathrm{u}}}\leq T,\nonumber\\
  && t_{\mathrm{u}}+\frac{\ell_{\mathrm{u,h}}C_{\mathrm{u}}}{f_{\mathrm{h}}} \leq T,\nonumber\\
  &&\ell_{\mathrm{u,a}}C_{\mathrm{u}} \leq F, \nonumber\\
  && \ell_{\mathrm{u,a}},\ell_{\mathrm{u,h}} \in \mathbb{R}_{\geq 0},\nonumber\\
   && 0\leq t_{\mathrm{u}}\leq T,\label{eqn:r42}
  \end{eqnarray}
where $f_{4}$ is defined to be the mapping
\begin{align}\label{eqn:r43}
f_{4}:\qquad\qquad\mathbb{R}_{\geq 0}^{3} \rightarrow\ & \mathbb{R}, \nonumber\\ 
(\ell_{\mathrm{u,a}},\ell_{\mathrm{u,h}},t_{\mathrm{u}})\mapsto\ &
w_{\mathrm{u}}\kappa f_{\mathrm{u}}^{2}(L_{\mathrm{u}}-\ell_{\mathrm{u,h}}-\ell_{\mathrm{u,a}})
+w_{\mathrm{h}}\kappa f_{\mathrm{h}}^{2}\ell_{\mathrm{u,h}}\nonumber\\&
+\left(\frac{w_{\mathrm{u}}t_{\mathrm{u}}}{h_{\mathrm{u,a}}}-\frac{w_{\mathrm{u}}t_{\mathrm{u}}}{h_{\mathrm{u,h}}}\right)f_{1}\left(\frac{\ell_{\mathrm{u,a}}}{t_{\mathrm{u}}}\right)
\nonumber\\&
+\frac{w_{\mathrm{u}}t_{\mathrm{u}}}
{h_{\mathrm{u,h}}}f_{1}
\left(\frac{\ell_{\mathrm{u,h}}
+\ell_{\mathrm{u,a}}}{t_{\mathrm{u}}}\right).
\end{align}
\begin{lemma}\label{Lm2}
The optimal solutions of problem \eqref{eqn:r42} satisfy $\ell_{\mathrm{u,h}}^{*}=(T-t_{\mathrm{u}}^*)f_{\mathrm{h}}/C_{\mathrm{u}}$.
\end{lemma}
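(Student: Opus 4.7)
The plan is to proceed by contradiction: I will assume that the helper-side time constraint $t_{\mathrm{u}} + \ell_{\mathrm{u,h}} C_{\mathrm{u}}/f_{\mathrm{h}} \leq T$ holds with strict inequality at an optimum, and then construct a feasible perturbation that strictly reduces $f_{4}$. Given an optimal $(\ell_{\mathrm{u,h}}^{*}, \ell_{\mathrm{u,a}}^{*}, t_{\mathrm{u}}^{*})$ with slack $\epsilon := T - t_{\mathrm{u}}^{*} - \ell_{\mathrm{u,h}}^{*} C_{\mathrm{u}}/f_{\mathrm{h}} > 0$, I keep $\ell_{\mathrm{u,h}}^{*}$ and $\ell_{\mathrm{u,a}}^{*}$ fixed and enlarge $t_{\mathrm{u}}$ to $t_{\mathrm{u}}^{*} + \epsilon$. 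All other constraints of \eqref{eqn:r42} are trivially preserved because they do not involve $t_{\mathrm{u}}$, and the perturbed helper-side constraint is met with equality, so feasibility is maintained.

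The key technical step is to verify that the transmission part of $f_{4}$, namely
\[
\frac{w_{\mathrm{u}}}{h_{\mathrm{u,h}}}\, t_{\mathrm{u}}\!\left(2^{(\ell_{\mathrm{u,h}}+\ell_{\mathrm{u,a}})/t_{\mathrm{u}}}-1\right) + w_{\mathrm{u}}\!\left(\frac{1}{h_{\mathrm{u,a}}}-\frac{1}{h_{\mathrm{u,h}}}\right) t_{\mathrm{u}}\!\left(2^{\ell_{\mathrm{u,a}}/t_{\mathrm{u}}}-1\right),
\]
is strictly decreasing in $t_{\mathrm{u}}$ at fixed $(\ell_{\mathrm{u,h}}, \ell_{\mathrm{u,a}})$. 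Under the standard NOMA ordering $h_{\mathrm{u,a}} \leq h_{\mathrm{u,h}}$, both coefficients multiplying the two pieces are non-negative. Differentiating the prototype $g(t,\ell) := t(2^{\ell/t}-1)$ gives $\partial_{t} g = (1-(\ell \ln 2)/t)\, 2^{\ell/t} - 1$, which is exactly the expression controlled by Lemma \ref{Lm1}; moreover a quick inspection of its proof shows the inequality is strict whenever $\ell/t > 0$. Consequently, as long as $\ell_{\mathrm{u,h}}^{*} + \ell_{\mathrm{u,a}}^{*} > 0$, the first of the two $g$-terms strictly decreases under the perturbation, so the transmission energy (and hence $f_{4}$) strictly decreases, contradicting the assumed optimality of $t_{\mathrm{u}}^{*}$.

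The remaining corner is $\ell_{\mathrm{u,h}}^{*} = \ell_{\mathrm{u,a}}^{*} = 0$: here the user computes everything locally, the NOMA transmission energy vanishes identically, and $t_{\mathrm{u}}$ does not enter the objective at all; therefore the choice $t_{\mathrm{u}}^{*}=T$ is an equally optimal solution and trivially satisfies $\ell_{\mathrm{u,h}}^{*} = (T-t_{\mathrm{u}}^{*}) f_{\mathrm{h}}/C_{\mathrm{u}} = 0$. The main obstacle I expect is simply upgrading Lemma \ref{Lm1} to its strict form for $x > 0$, which should follow immediately from the fact that $(1-x\ln 2)2^{x}-1$ vanishes at $x=0$ and has derivative $-(\ln 2)^{2}\, x\, 2^{x} < 0$ for $x>0$; once that is in hand, the monotonicity-plus-perturbation skeleton closes the argument and yields the claimed identity $\ell_{\mathrm{u,h}}^{*} = (T-t_{\mathrm{u}}^{*})f_{\mathrm{h}}/C_{\mathrm{u}}$.
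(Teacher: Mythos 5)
Your proof is correct and follows essentially the same route as the paper: Appendix C likewise argues that $\partial f_4/\partial t_{\mathrm{u}}\leq 0$ via Lemma \ref{Lm1} and concludes that the helper-side time constraint must be active at an optimum. Your version is just a more careful rendering of that monotonicity argument — phrasing it as an explicit perturbation, upgrading Lemma \ref{Lm1} to strict inequality for $x>0$, and handling the degenerate case $\ell_{\mathrm{u,h}}^*=\ell_{\mathrm{u,a}}^*=0$ where the identity holds only for a suitably chosen optimal $t_{\mathrm{u}}^*$.
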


\begin{proof}
 	Please refer to Appendix \ref{AP3}.
\end{proof}

Lemma \ref{Lm2} means that when the helper does not have its own task to execute, i.e., $L_{\mathrm{h}}=0$, the user tries to make its offloading time $t_{\mathrm{u}}$ as long as possible until the helper fails to complete the task offloaded by the user during the remaining time $T-t_{\mathrm{u}}$.

With $\ell_{\mathrm{u,h}}=(T-t_{\mathrm{u}})f_{\mathrm{h}}/C_{\mathrm{u}}$, we have
\begin{align}\label{eqn:r45}
\frac{\partial f_4}{\partial t_{\mathrm{u}}}=&
\left(\frac{w_{\mathrm{u}}}{h_{\mathrm{u,a}}}
-\frac{w_{\mathrm{u}}}{h_{\mathrm{u,h}}}\right)
\left(2^{\frac{\ell_{\mathrm{u,a}}}{t_{\mathrm{u}}}}
\left(1-\frac{\ell_{\mathrm{u,a}}\ln2}{t_{\mathrm{u}}}\right) 
-1\right)\nonumber\\&
+\frac{w_{\mathrm{u}}}{h_{\mathrm{u,h}}}2^{\frac{(T-t_{\mathrm{u}})f_{\mathrm{h}}+\ell_{\mathrm{u,a}}
C_{\mathrm{u}}}{t_{\mathrm{u}}C_{\mathrm{u}}}}\left(1-\frac{(Tf_{\mathrm{h}}
+\ell_{\mathrm{u,a}}C_{\mathrm{u}})\ln2}{t_{\mathrm{u}}C_{\mathrm{u}}}\right) 
\nonumber\\&
-\frac{w_{\mathrm{u}}}{h_{\mathrm{u,h}}}
+\frac{\kappa f_{\mathrm{h}}}{C_{\mathrm{u}}}\left(w_{\mathrm{u}}f_{\mathrm{u}}^{2}
-w_{\mathrm{h}}f_{\mathrm{h}}^{2}\right).
\end{align}
From \eqref{eqn:r45} we can see that there are two cases of $\frac{\partial f_4}{\partial t_{\mathrm{u}}}$. The first case is $w_{\mathrm{u}}f_{\mathrm{u}}^{2}-w_{\mathrm{h}}f_{\mathrm{h}}^{2}\leq 0$, which means that compared with computing task at the helper, it is more energy-efficient to locally compute the task at the user. As we proved in Appendix \ref{AP1}, in this case $\frac{\partial f_4}{\partial t_{\mathrm{u}}}\leq 0$, which denotes that the objective function $f_4$ is a non-increasing function with respect to $t_{\mathrm{u}}$. Thus we can further derive that $t_{\mathrm{u}}^*=T$ and $\ell_{\mathrm{u,h}}^*=0$, which means that the user only offloads data to the AP. Then with $t_{\mathrm{u}}=T$, the partial derivative of $f_4$ with respect to $\ell_{\mathrm{u,a}}$ can be expressed as
\begin{align}\label{eqn:r46}
\frac{\partial f_4}{\partial \ell_{\mathrm{u,a}}}=&\frac{w_{\mathrm{u}}\ln2}{h_{\mathrm{u,a}}}2^{\frac{\ell_{\mathrm{u,a}}}{T}}-w_{\mathrm{u}}\kappa f_{\mathrm{u}}^{2}.
\end{align}
Then $\ell_{\mathrm{u,a}}^*$ can be obtained by setting $\frac{\partial f_4}{\partial \ell_{\mathrm{u,a}}}$ equal to zero and considering the constraints $(L_{\mathrm{u}}-\ell_{\mathrm{u,a}})C_{\mathrm{u}}\leq f_{\mathrm{u}}T$ and $\ell_{\mathrm{u,a}}C_{\mathrm{u}} \leq F$. We have
\begin{align}\label{eqn:r47}
&\ell_{\mathrm{u,a}}^*=\min\left(\max \left(T\left[\log_{2}\frac{\kappa f_{\mathrm{u}}^{2}h_{\mathrm{u,a}}}{\ln2}\right]^{+}\!\!, L_{\mathrm{u}}-\frac{f_{\mathrm{u}}T}{C_{\mathrm{u}}}\right), \frac{F}{C_{\mathrm{u}}}\right).
\end{align}

On the other hand, for the case  $w_{\mathrm{u}}f_{\mathrm{u}}^{2}-w_{\mathrm{h}}f_{\mathrm{h}}^{2}> 0$, we use the Lagrangian duality method to solve the optimal values of $\ell_{\mathrm{u,a}}$ and $t_{\mathrm{u}}$ because of the convexity of the problem.  The Lagrangian function is given by
\begin{align}
& L_2(\ell_{\mathrm{u,a}}, t_{\mathrm{u}},\lambda_{1})=
\frac{w_{\mathrm{u}}t_{\mathrm{u}}}{h_{\mathrm{u,h}}}f_{1}\left(\frac{(T-t_{\mathrm{u}})f_{\mathrm{h}}
+\ell_{\mathrm{u,a}}C_{\mathrm{u}}}{t_{\mathrm{u}}C_{\mathrm{u}}}\right)
\nonumber\\&
+\left(\frac{w_{\mathrm{u}}t_{\mathrm{u}}}{h_{\mathrm{u,a}}}-\frac{w_{\mathrm{u}}t_{\mathrm{u}}}{h_{\mathrm{u,h}}}\right)f_{1}
\left(\frac{\ell_{\mathrm{u,a}}}{t_{\mathrm{u}}}\right)
+w_{\mathrm{h}}\kappa f_{\mathrm{h}}^{2}\frac{(T-t_{\mathrm{u}})
f_{\mathrm{h}}}{C_{\mathrm{u}}}\nonumber\\&+w_{\mathrm{u}}\kappa f_{\mathrm{u}}^{2}
(L_{\mathrm{u}}-(T-t_{\mathrm{u}})\frac{f_{\mathrm{h}}}{C_{\mathrm{u}}}
-\ell_{\mathrm{u,a}})\nonumber\\&+\lambda_{1}(L_{\mathrm{u}}C_{\mathrm{u}}
-(T-t_{\mathrm{u}})f_{\mathrm{h}}-\ell_{\mathrm{u,a}}C_{\mathrm{u}}-f_{\mathrm{u}}T),\label{eqn:r44}
\end{align}
where $\lambda_{1}\in \mathbb{R}_{\geq 0}$ is the Lagrangian multiplier with a slight abuse of notation. The dual function is defined to be $g_2(\lambda_{1}):=\min_{ F/C_{\mathrm{u}}\geq \ell_{\mathrm{u,a}}\geq 0,T\geq t_{\mathrm{u}}\geq0 }L_2(\ell_{\mathrm{u,a}}, t_{\mathrm{u}},\lambda_{1})$ and the dual problem is $\max_{ \lambda_{1}\geq 0}g_2(\lambda_{1})$.
Note that problem \eqref{eqn:r42} is a convex problem and the block coordinate descent (BCD) method \cite{Richtarik2014} can be adopted to solve the problem, where we alternatively optimize one of $\ell_{\mathrm{u,a}}$ and $t_{\mathrm{u}}$ with the other fixed.
Given $t_{\mathrm{u}}$, we have
\begin{align}
\frac{\partial L_2}{\partial \ell_{\mathrm{u,a}}}=&\frac{w_{\mathrm{u}}\ln2}{h_{\mathrm{u,h}}}2^{\frac{(T-t_{\mathrm{u}})f_{\mathrm{h}}+\ell_{\mathrm{u,a}}C_{\mathrm{u}}}{t_{\mathrm{u}}C_{\mathrm{u}}}}-w_{\mathrm{u}}\kappa f_{\mathrm{u}}^{2}-\lambda_{1}C_{\mathrm{u}}\nonumber\\&+w_{\mathrm{u}}\ln2\left(\frac{1}{h_{\mathrm{u,a}}} - \frac{1}{h_{\mathrm{u,h}}}\right)2^{\frac{\ell_{\mathrm{u,a}}}{t_{\mathrm{u}}}}.\label{eqn:r48}
\end{align}
We can obtain $\ell_{\mathrm{u,a}}^*$ by setting $\frac{\partial L_2}{\partial \ell_{\mathrm{u,a}}}$ equal to zero and considering $F/C_{\mathrm{u}}\geq \ell_{\mathrm{u,a}}$, which is expressed as
\begin{align}\label{eqn:r65}
	&\ell_{\mathrm{u,a}}^*=
	\min\biggl( \frac{F}{C_{\mathrm{u}}},\nonumber\\&
	 t_{\mathrm{u}}\left(\log_{2}
	\left(\frac{h_{\mathrm{u,h}}h_{\mathrm{u,a}}
	(w_{\mathrm{u}}\kappa f_{\mathrm{u}}^{2}
	+\lambda_{1}C_{\mathrm{u}})}{w_{\mathrm{u}}
	\ln2(h_{\mathrm{u,a}}2^{\frac{f_{\mathrm{h}}(T-t_{\mathrm{u}})}
	{t_{\mathrm{u}}C_{\mathrm{u}}}}+h_{\mathrm{u,h}}
	-h_{\mathrm{u,a}})}\right)\right)^{+}
	 &\!\!\biggr).
\end{align}

Next, with fixed $\ell_{\mathrm{u,a}}$, we have $\frac{\partial L_2}{\partial t_{\mathrm{u}}}=\frac{\partial f_4}{\partial t_{\mathrm{u}}}+\lambda_{1}f_{\mathrm{h}}$, where $\frac{\partial f_4}{\partial t_{\mathrm{u}}}$ can be obtained in \eqref{eqn:r45}. According to KKT conditions, $t_{\mathrm{u}}^*$ can be obtained by setting $\frac{\partial L_2}{\partial t_{\mathrm{u}}}$ equal to zero. However, the closed-form solution of $t_{\mathrm{u}}$ is non-trivial to obtain and thus we adopt the bisection search within $0\leq t_{\mathrm{u}}\leq T$ to solve $t_{\mathrm{u}}^*$. After solving the dual function $g_2(\lambda_{1})$, we adopt the bisection search to find the optimal $\lambda_{1}^*$.

The algorithm for solving problem \eqref{eqn:r42} is presented in Algorithm \ref{alg:A3}. Note that the complexities of the bisection method used for obtaining $ t_{\mathrm{u}}^*$ and $\lambda_{1}^*$ are sub-linear. Moreover, the complexity for solving $\ell_{\mathrm{u,a}}^*$ and $t_{\mathrm{u}}^*$ with the BCD method is linear.  Thus the total complexity of Algorithm \ref{alg:A3} is linear.

\begin{algorithm}[t]
\caption{Optimal algorithm for problem \eqref{eqn:r42}}\label{alg:A3}

\begin{algorithmic}[1]
\IF {$w_{\mathrm{u}}f_{\mathrm{u}}^{2}-w_{\mathrm{h}}f_{\mathrm{h}}^{2}\leq 0$}
\STATE Compute $\ell_{\mathrm{h,a}}^*$ by \eqref{eqn:r47}.
\STATE Set $t_{\mathrm{u}}^*=T$.
\ELSE
\STATE Initialize $\lambda_{1}$.
\REPEAT
\STATE Initialize $\ell_{\mathrm{u,a}}$ and $t_{\mathrm{u}}$.
\REPEAT
\STATE Compute $\ell_{\mathrm{u,a}}$ by \eqref{eqn:r65} for given $t_{\mathrm{u}}$.
\STATE Compute $ t_{\mathrm{u}}$ that maximizes $L_2$ by bisection search for given $\ell_{\mathrm{u,a}}$.
\UNTIL{The improvement of $L_{2}$ stops.}
\STATE Update $\lambda_{1}$ by the bisection method.
\UNTIL{$\lambda_{1}$ converges to a prescribed accuracy.}
\ENDIF
\STATE Obtain $\ell_{\mathrm{u,h}}^{*}=(T-t^{*}_{\mathrm{u}})f_{\mathrm{h}}/C_{\mathrm{u}}$.
\end{algorithmic}
\end{algorithm}

\section{Optimal Solution for Offloading Data Maximization Problem}\label{se3}
In this section, we solve the offloading data maximization Problem (P2).
\subsection{Problem Transformation}
\begin{lemma}\label{Lm4}
	The optimal transmit power of the user satisfies $p_{\mathrm{u,h}}^*+p_{\mathrm{u,a}}^*=\bar{P}_{\mathrm{u}}$.
\end{lemma}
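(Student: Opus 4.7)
The plan is to prove the equality by contradiction via a straightforward perturbation argument on $p_{\mathrm{u,a}}$ alone. Assume for contradiction that at an optimum $(\bm{p}^*,\bm{\ell}^*,\bm{t}^*)$ of Problem~(P2) we have $p_{\mathrm{u,h}}^*+p_{\mathrm{u,a}}^*<\bar{P}_{\mathrm{u}}$, and set $\delta:=\bar{P}_{\mathrm{u}}-p_{\mathrm{u,h}}^*-p_{\mathrm{u,a}}^*>0$. Construct a candidate point by replacing $p_{\mathrm{u,a}}^*$ with $p_{\mathrm{u,a}}':=p_{\mathrm{u,a}}^*+\delta$, then updating $\ell_{\mathrm{u,a}}$ through its defining relation \eqref{eqref:r67}, while leaving $p_{\mathrm{u,h}}^*,\,p_{\mathrm{h,a}}^*,\,t_{\mathrm{u}}^*,\,t_{\mathrm{h}}^*,\,\ell_{\mathrm{u,h}}^*,\,\ell_{\mathrm{h,a}}^*$ all untouched.

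Next I would verify that the new point is feasible for (P2). Constraint \eqref{eqn:r17} now holds with equality at $\bar{P}_{\mathrm{u}}$, and constraints \eqref{max_trasmit_power_helper}, \eqref{cons_available_energy_helper}, \eqref{eqn:r18}, and \eqref{eqn:r19} depend only on quantities that were kept fixed ($p_{\mathrm{h,a}}$, $\ell_{\mathrm{u,h}}$, $t_{\mathrm{u}}$, $t_{\mathrm{h}}$), so they remain satisfied. Then I would show that the objective strictly increases. By the SIC decoding order, the rate $R^{(1)}_{\mathrm{u,h}}$ in \eqref{eqn:r5} depends only on $p_{\mathrm{u,h}}$, so $\ell_{\mathrm{u,h}}$ is unaffected; while $R^{(1)}_{\mathrm{u,a}}$ in \eqref{eqn:r6} is strictly increasing in $p_{\mathrm{u,a}}$ with $p_{\mathrm{u,h}}$ fixed, which yields $\ell_{\mathrm{u,a}}'>\ell_{\mathrm{u,a}}^*$. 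Since $\ell_{\mathrm{h,a}}$ is unchanged, the objective $w_{\mathrm{u}}(\ell_{\mathrm{u,h}}+\ell_{\mathrm{u,a}})+w_{\mathrm{h}}\ell_{\mathrm{h,a}}$ strictly increases, contradicting optimality. Hence $p_{\mathrm{u,h}}^*+p_{\mathrm{u,a}}^*=\bar{P}_{\mathrm{u}}$.

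There is no real obstacle in this argument; the only modelling point worth flagging is the deliberate choice to perturb $p_{\mathrm{u,a}}$ rather than $p_{\mathrm{u,h}}$. Pushing $p_{\mathrm{u,h}}$ upward would couple into two places at once, decreasing $R^{(1)}_{\mathrm{u,a}}$ through the interference term in \eqref{eqn:r6}, and raising $\ell_{\mathrm{u,h}}$ which could tighten \eqref{cons_available_energy_helper} and \eqref{eqn:r18}. Perturbing $p_{\mathrm{u,a}}$ sidesteps every one of these couplings and delivers a clean monotone improvement, which is exactly what the contradiction needs.
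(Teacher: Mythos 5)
Your proof is correct and is essentially the paper's argument: the paper's Appendix~D simply rewrites $w_{\mathrm{u}}(\ell_{\mathrm{u,h}}+\ell_{\mathrm{u,a}})+w_{\mathrm{h}}\ell_{\mathrm{h,a}}$ as an explicit function of the powers and observes it is increasing in $p_{\mathrm{u,a}}$ with $p_{\mathrm{u,h}}$ fixed, so constraint \eqref{eqn:r17} must be active --- the same monotone perturbation of $p_{\mathrm{u,a}}$ you carry out, with your feasibility bookkeeping made explicit rather than left implicit.
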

\begin{proof}
 	Please refer to Appendix \ref{AP4}.
\end{proof}
For simplicity, we introduce a variable $\beta\in \mathbb{R}_{\geq 0}$ denoting the proportion of $p_{\mathrm{u,h}}$ over $\bar{P}_{\mathrm{u}}$. Then $p_{\mathrm{u,h}}$ and $p_{\mathrm{u,a}}$ can be rewritten as $p_{\mathrm{u,h}}=\beta\bar{P}_{\mathrm{u}}$ and $p_{\mathrm{u,a}}=(1-\beta)\bar{P}_{\mathrm{u}}$, respectively. Problem (P2) is equivalent to
\begin{align}
{\text{ (P2')}:}\max_{\beta, \bm{t}}&\quad  w_{\mathrm{u}}t_{\mathrm{u}}
\left(\log_{2}\frac{1+\beta\bar{P}_{\mathrm{u}}h_{\mathrm{u,h}}}
{1+\beta\bar{P}_{\mathrm{u}}h_{\mathrm{u,a}}}+R_{1}\right)\!
+\! w_{\mathrm{h}}t_{\mathrm{h}}R_{2}\nonumber \\
{\rm s.t.}
  &\quad  \kappa f^{2}_{\mathrm{h}} t_{\mathrm{u}}\log_{2} \left(1+\beta\bar{P}_{\mathrm{u}} 
  h_{\mathrm{u,h}}\right) \leq E'_{\mathrm{h}},\label{eqn:const_enegy_helper} \\
  &\quad  t_{\mathrm{u}}\left(1+\frac{C_{\mathrm{u}}}{f_{\mathrm{h}}} 
  \log_{2}\left(1+\beta\bar{P}_{\mathrm{u}}h_{\mathrm{u,h}}\right)\right)\leq T,\label{eqn:r69}\\
  &\quad  0\leq t_{\mathrm{u}}+t_{\mathrm{h}}\leq T,\label{eqn:r68}\\
  &\quad  0\leq \beta \leq 1, \label{eqn:r49}\\
  &\quad  \beta\in \mathbb{R}, \bm{t}\in \mathbb{R}^{2},
\end{align}
where $R_{1}:=\log_{2}\left(1+h_{\mathrm{u,a}}\bar{P}_{\mathrm{u}}\right)$ and $R_{2}:=\log_{2}\left(1+h_{\mathrm{h,a}}\bar{P}_{\mathrm{h}}\right)$ respectively denote the maximum offloading rates from the user and helper to the AP. Note that the objective function of Problem (P2') is non-decreasing with respect to $t_{\mathrm{h}}$. Therefore, to maximize the weighted offloading data, the equality $t_{\mathrm{u}}+t_{\mathrm{h}}=T$ holds.

Additionally, the objective function of Problem (P2') is non-decreasing with respect to $\beta$ due to the assumption $h_{\mathrm{u,h}}\geq h_{\mathrm{u,a}}$. Since the upper bound of $\beta$ is related to constraints \eqref{eqn:const_enegy_helper}, \eqref{eqn:r69}, and \eqref{eqn:r49}, at least one of the equalities (or upper bounds) in the constraints \eqref{eqn:const_enegy_helper}, \eqref{eqn:r69}, and \eqref{eqn:r49} holds at the optimal point of $\beta$. This property can be obtained by contradiction and the details are omitted here.

First we consider the case $\beta = 1$, i.e., the equality in constraint \eqref{eqn:r49} holds, which implies that the user allocates all transmit power to the helper. Then Problem (P2') turns out to be a time allocation problem where $t_{\mathrm{u}}$ and $t_{\mathrm{h}}$ are jointly optimized. With $t_{\mathrm{u}}+t_{\mathrm{h}}=T$, Problem (P2') can be simplified as
\begin{align}
\max_{ t_{\mathrm{u}} }&\quad w_{\mathrm{u}}t_{\mathrm{u}}\log_{2}\left(1+\bar{P}_{\mathrm{u}}h_{\mathrm{u,h}}\right)+w_{\mathrm{h}}R_{2}(T-t_{\mathrm{u}})\label{Rate_beta1}\\
{\rm s.t.}
  &\quad \kappa f^{2}_{\mathrm{h}} t_{\mathrm{u}}\log (1+\bar{P}_{\mathrm{u}} h_{\mathrm{u,h}}) \leq E'_{\mathrm{h}},\\
  &\quad t_{\mathrm{u}}\left(1+\frac{C_{\mathrm{u}}}{f_{\mathrm{h}}} 
  \log_{2}\left(1+\bar{P}_{\mathrm{u}}h_{\mathrm{u,h}}\right)\right)\leq T,\label{eqn:r54}\\
  &\quad  t_{\mathrm{u}}\in \mathbb{R}_{\geq0}.
  \end{align}

Then we obtain the solution of the above problem:
\begin{align}\label{eqn:r55}
t_{\mathrm{u}}^*|_{\beta=1}=\left\{\begin{array}{ll}
0&  E\leq 0,\\
t_{1}&
\text{otherwise},
\end{array}\right.
\end{align}
where $E:=w_{\mathrm{u}}\log_{2}\left(1+\bar{P}_{\mathrm{u}}h_{\mathrm{u,h}}\right)-w_{\mathrm{h}}R_{2} \in \mathbb{R}$ is a constant with respect to $t_{\mathrm{u}}$ and $t_{1}$ is defined to be
\begin{align*}
t_{1}:=&\min\left(\frac{Tf_{\mathrm{h}}}{f_{\mathrm{h}}+C_{\mathrm{u}}
\log_{2}\left(1+\bar{P}_{\mathrm{u}}h_{\mathrm{u,h}}\right)},\right.
\nonumber\\&\left.\frac{E'_{\mathrm{h}}}{\kappa f^{2}_{\mathrm{h}}
\log (1+\bar{P}_{\mathrm{u}} h_{\mathrm{u,h}})}\right).
\end{align*}
Define $R_{\mathrm{total}}(\beta=1)$ to be the optimal value of problem \eqref{Rate_beta1}, we have 
\begin{align}\label{eqn:r56}
R_{\mathrm{total}}(\beta=1)=
w_{\mathrm{h}}TR_{2}+[E]^{+}t_{1}.
\end{align}

Let us consider another case that the equality in the constraint \eqref{eqn:r69} holds, implying that the user offloads as much as computation input data to the helper during the whole transmit time $T$, i.e., the time of receiving and processing the user's task in the helper is $T$. In this case, we can replace $\beta$ by function $\beta_{1}(t_{\mathrm{u}})$, which is defined to be
\begin{align}
  \beta_{1}(t_{\mathrm{u}}):= \frac{2^{\frac{f_{\mathrm{h}}(T-t_{\mathrm{u}})}{C_{\mathrm{u}} t_{\mathrm{u}}}}-1}{\bar{P}_{\mathrm{u}}h_{\mathrm{u,h}}},\ \forall t_{\mathrm{u}}\in \mathbb{R}_{\geq 0}.\label{eqn:r73}
\end{align}
Because of the constraint $0\leq \beta\leq 1$, the condition $0\leq\beta_{1}(t_{\mathrm{u}})\leq 1$ is satisfied if and only if $t_{\mathrm{u}}\geq Tf_{\mathrm{h}}/(f_{\mathrm{h}}+C_{\mathrm{u}}\log_{2}\left(1+\bar{P}_{\mathrm{u}}h_{\mathrm{u,h}}\right)$.

From \eqref{eqn:r73}, we observe that $\beta_{1}(t_{\mathrm{u}})$ is an exponential function with respect to $1/t_{\mathrm{u}}$. In other words, when the user offloads more bits to the helper (i.e., $\beta_{1}(t_{\mathrm{u}})$ increases), it spends less time in offloading data (i.e., smaller $t_{\mathrm{u}}$) and thus the helper has a longer time ($t_{\mathrm{h}}=T-t_{\mathrm{u}}$) to offload data. This explicitly explains that in this condition the proposed cooperative computing scheme leads to a win-win situation.

After substituting \eqref{eqn:r73} into Problem (P2'), Problem (P2') is equivalent to
\begin{align}
\max_{t_{\mathrm{u}}}&\quad f_{5}(t_{\mathrm{u}})\label{eqn:r50} \\
{\rm s.t.}
  &\quad \frac{Tf_{\mathrm{h}}}
  {f_{\mathrm{h}}+C_{\mathrm{u}}\log_{2}
  \left(1+\bar{P}_{\mathrm{u}}h_{\mathrm{u,h}}\right)} 
  \leq t_{\mathrm{u}}\\
 &\quad T-\frac{E'_{\mathrm{h}}C_{\mathrm{u}}}
 	{\kappa f^{3}_{\mathrm{h}}} \leq t_{\mathrm{u}}\\
  &\quad t_{\mathrm{u}}\leq T\\
  &\quad t_{\mathrm{u}}\in \mathbb{R},
  \end{align}
where $f_{5}$ is defined to be the mapping
\begin{align}
f_{5}: \mathbb{R}_{\geq 0}\rightarrow&\ \mathbb{R}, \nonumber\\ 
t_{\mathrm{u}}\mapsto&\ t_{\mathrm{u}}\left(w_{\mathrm{u}}R_{1}
-\frac{w_{\mathrm{u}}f_{\mathrm{h}}}{C_{\mathrm{u}}}-w_{\mathrm{h}}R_{2}\right)\nonumber\\&
+T\left(\frac{w_{\mathrm{u}}f_{\mathrm{h}}}{C_{\mathrm{u}}}+w_{\mathrm{h}}R_{2}\right)\nonumber\\&
-w_{\mathrm{u}}t_{\mathrm{u}}\log_{2}\left(1-\frac{h_{\mathrm{u,a}}}{h_{\mathrm{u,h}}}
+\frac{h_{\mathrm{u,a}}}{h_{\mathrm{u,h}}}2^{\frac{f_{\mathrm{h}}(T-t_{\mathrm{u}})}{C_{\mathrm{u}}
 t_{\mathrm{u}}}}\right).\label{eqn:r64}
\end{align}

The derivative of $f_{5}(t_{\mathrm{u}})$ is
\begin{align}
\dv{f_{5}}{t_{\mathrm{u}}}=&-w_{\mathrm{u}}\log_{2}\left(1-\frac{h_{\mathrm{u,a}}}
{h_{\mathrm{u,h}}}
+\frac{h_{\mathrm{u,a}}}
{h_{\mathrm{u,h}}}2^{\frac{f_{\mathrm{h}}
(T-t_{\mathrm{u}})}{C_{\mathrm{u}} t_{\mathrm{u}}}}\right)
-w_{\mathrm{u}}\frac{f_{\mathrm{h}}}{C_{\mathrm{u}}}\nonumber\\&
+\frac{w_{\mathrm{u}}Tf_{\mathrm{h}}h_{\mathrm{u,a}}2^{\frac{f_{\mathrm{h}}
(T-t_{\mathrm{u}})}{C_{\mathrm{u}} t_{\mathrm{u}}}}}{C_{\mathrm{u}} 
t_{\mathrm{u}}(h_{\mathrm{u,h}}-h_{\mathrm{u,a}}+h_{\mathrm{u,a}}
2^{\frac{f_{\mathrm{h}}(T-t_{\mathrm{u}})}{C_{\mathrm{u}} t_{\mathrm{u}}}})}\nonumber\\&
-w_{\mathrm{h}}R_{2}+w_{\mathrm{u}}R_{1}.\label{eqn:r53}
\end{align}
\begin{lemma}\label{lem5}
	Function $f_{5}(t_{\mathrm{u}})$ is a concave function over the feasible set of problem \eqref{eqn:r50}.
\end{lemma}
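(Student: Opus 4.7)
My plan is to decompose $f_{5}$ into an affine part plus a concave part. Inspection of \eqref{eqn:r64} shows that the first two terms are affine in $t_{\mathrm{u}}$, so the work reduces to proving that the function
\[
g(t_{\mathrm{u}}) := t_{\mathrm{u}}\log_{2}\!\left(1 - \frac{h_{\mathrm{u,a}}}{h_{\mathrm{u,h}}} + \frac{h_{\mathrm{u,a}}}{h_{\mathrm{u,h}}}\,2^{\frac{f_{\mathrm{h}}(T-t_{\mathrm{u}})}{C_{\mathrm{u}}\, t_{\mathrm{u}}}}\right)
\]
is convex on $t_{\mathrm{u}} > 0$. Since the feasible set of \eqref{eqn:r50} sits inside this region and $w_{\mathrm{u}} \geq 0$, combining with $-w_{\mathrm{u}} g$ will then deliver the concavity of $f_{5}$.

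The key move is to recognize $g$ as a perspective function. Setting $a := h_{\mathrm{u,a}}/h_{\mathrm{u,h}}$ and $b := f_{\mathrm{h}}/C_{\mathrm{u}}$, and defining $h(x) := \log_{2}(1 - a + a\cdot 2^{bx - b})$, a direct substitution confirms that $g(t_{\mathrm{u}}) = t_{\mathrm{u}}\, h(T/t_{\mathrm{u}})$, i.e.\ exactly the perspective transform $\tilde{h}(x,t) = t\, h(x/t)$ evaluated at the fixed first argument $x = T$.

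Next I would verify convexity of $h$ by passing to the auxiliary function $F(u) := \log_{2}(1 - a + a\cdot 2^{u})$. A short calculation produces
\[
F''(u) = \frac{a(1-a)\, 2^{u}\ln 2}{(1 - a + a\cdot 2^{u})^{2}},
\]
which is nonnegative precisely when $a \in [0,1]$. The standing assumption $h_{\mathrm{u,a}} \leq h_{\mathrm{u,h}}$ (inherited from Lemma \ref{Lm5}) forces $a \in (0,1]$, so $F$ is convex; then $h(x) = F(bx - b)$ is convex as a composition of a convex function with an affine map. Invoking the standard fact that the perspective of a convex function is jointly convex on its domain (and that restricting it to the slice $x = T$ preserves convexity in $t_{\mathrm{u}}$) gives the convexity of $g$, hence the concavity of $f_{5}$ on the feasible set of \eqref{eqn:r50}.

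The step I anticipate as the main obstacle is spotting the perspective structure inside $g$; once that identification is made, the remainder is a routine sign check on a scalar second derivative, cleanly handled by the channel-ordering hypothesis. A head-on attempt to show $f_{5}''(t_{\mathrm{u}}) \leq 0$ directly is feasible but becomes unwieldy because of the nested exponential $2^{f_{\mathrm{h}}(T-t_{\mathrm{u}})/(C_{\mathrm{u}} t_{\mathrm{u}})}$, so the perspective viewpoint saves considerable bookkeeping.
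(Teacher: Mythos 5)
Your proof is correct, and it reaches the conclusion by a genuinely different route from the paper's. The paper (Appendix E) isolates the same nonlinear term, calling it $f_{6}(t_{\mathrm{u}})=-w_{\mathrm{u}}t_{\mathrm{u}}\log_{2}\bigl(1-\tfrac{h_{\mathrm{u,a}}}{h_{\mathrm{u,h}}}+\tfrac{h_{\mathrm{u,a}}}{h_{\mathrm{u,h}}}2^{f_{\mathrm{h}}(T-t_{\mathrm{u}})/(t_{\mathrm{u}}C_{\mathrm{u}})}\bigr)$, and differentiates it twice head-on, obtaining a closed form for $(\dv{}{t_{\mathrm{u}}})^{2}f_{6}$ whose numerator carries the factor $-(h_{\mathrm{u,h}}-h_{\mathrm{u,a}})h_{\mathrm{u,a}}$, so the sign follows from the channel ordering; concavity of $f_{5}$ over the feasible set then follows because the constraints of problem \eqref{eqn:r50} are linear. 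You instead expose the perspective structure $g(t_{\mathrm{u}})=t_{\mathrm{u}}h(T/t_{\mathrm{u}})$ with $h(x)=F(bx-b)$ and $F(u)=\log_{2}(1-a+a2^{u})$, reduce everything to the one-line sign check $F''(u)=a(1-a)2^{u}\ln 2/(1-a+a2^{u})^{2}\geq 0$ for $a\in[0,1]$ (your formula is correct), and invoke joint convexity of the perspective plus restriction to the affine slice $x=T$. Both arguments turn on exactly the same hypothesis $h_{\mathrm{u,a}}\leq h_{\mathrm{u,h}}$, and the paper's second-derivative formula is essentially your $F''$ pushed through the chain rule of the perspective map. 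What your version buys is less bookkeeping with the nested exponential and a pleasing unification with the paper's own proof of Lemma \ref{Lm5} in Appendix A, which uses the identical perspective device for $f_{2}$; what the paper's version buys is that the explicit first derivative \eqref{eqn:r53} is needed anyway for the bisection search, so one further differentiation is a small incremental cost and yields a fully explicit $f_{5}''$. One small point worth making explicit in your write-up: the first constraint of \eqref{eqn:r50} bounds $t_{\mathrm{u}}$ away from zero, so the perspective is well defined on the entire feasible set.
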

As we prove in Appendix E, $(\dv{}{t_{\mathrm{u}}})^{2}f_{5}\leq 0$ is always satisfied. Let us define $R_{\mathrm{total}}(\beta=\beta_{1}(t_{\mathrm{u}}))$ to be the optimal value of problem \eqref{eqn:r50}. We can solve $R_{\mathrm{total}}(\beta=\beta_{1}(t_{\mathrm{u}}))$ by solving $\dv{f_{5}}{t_{\mathrm{u}}}=0$ and considering constraints of $t_{\mathrm{u}}$ in problem \eqref{eqn:r50}. However, it is non-trivial to obtain the closed-form solution of $t_{\mathrm{u}}$. Here we use the bisection search to obtain the optimal $t_{\mathrm{u}}$.

As for the case that constraint \eqref{eqn:const_enegy_helper} holds, i.e., the helper's available energy for computing the task offloaded by the user is fully utilized, the variable $\beta$ can be represented by function $\beta_{2}(t_{\mathrm{u}})$, which is defined to be
\begin{align}
  \beta_{2}(t_{\mathrm{u}}):=\frac{2^{\frac{E'_{\mathrm{h}}}{\kappa f_{\mathrm{h}}^{2}t_{\mathrm{u}} }}-1}{\bar{P}_{\mathrm{u}}h_{\mathrm{u,h}}},\ \forall t_{\mathrm{u}}\in \mathbb{R}_{\geq 0}.
\end{align}
The condition $0\leq\beta_{2}(t_{\mathrm{u}})\leq 1$ is satisfied if and only if the inequality $E'_{\mathrm{h}}/(\kappa f^{2}_{\mathrm{h}}\log_{2} (1+\bar{P}_{\mathrm{u}}h_{\mathrm{u,h}}))\leq t_{\mathrm{u}}$ holds for all $t_{\mathrm{u}}\geq 0$. Similarly, from the definition of $\beta_{2}(t_{\mathrm{u}})$, we find that this case is also a win-win situation between the user and the helper. 

With $\beta=\beta_{2}(t_{\mathrm{u}})$, Problem (P2') can be rewritten as
\begin{align}
\max_{t_{\mathrm{u}}}&\quad f_{7}(t_{\mathrm{u}})\label{eqn:tmin_beta_third_case} \\
{\rm s.t.}
  &\quad \frac{E'_{\mathrm{h}} }{\kappa f^{2}_{\mathrm{h}}\log (1+\bar{P}_{\mathrm{u}}h_{\mathrm{u,h}})}\leq t_{\mathrm{u}}
  \leq T-\frac{E'_{\mathrm{h}}C_{\mathrm{u}}}{\kappa f^{3}_{\mathrm{h}}},\\
  &\quad t_{\mathrm{u}}\leq T,
\end{align}
where $f_{7}$ is defined to be the mapping 
\begin{align}
f_{7}: \mathbb{R}_{\geq 0}\rightarrow\ & \mathbb{R}, \nonumber\\
 t_{\mathrm{u}}\mapsto\ &\frac{w_{\mathrm{u}} E'_{\mathrm{h}}}
 {\kappa f^{2}_{\mathrm{h}}}+w_{\mathrm{h}}TR_{2}+t_{\mathrm{u}}
 \left(w_{\mathrm{u}}R_{1}-w_{\mathrm{h}}R_{2}\right)\nonumber\\&
 -w_{\mathrm{u}}t_{\mathrm{u}}\log_{2}\left(1-\frac{h_{\mathrm{h,a}}}
 {h_{\mathrm{u,h}}}+\frac{h_{\mathrm{h,a}}}{h_{\mathrm{u,h}}}
 2^{\frac{E'_{\mathrm{h}}}{\kappa f^{2}_{\mathrm{h}}t_{\mathrm{u}} }}\right).\label{eqn:obj_third_case}
\end{align}
Similarly as Appendix E, we have $(\dv{}{t_{\mathrm{u}}})^{2}f_{7}\leq 0$ for all $t_{\mathrm{u}}\geq 0$. We can obtain the optimal $t_{\mathrm{u}}$ by solving $\dv{f_{7}}{t_{\mathrm{u}}}=0$ and considering constraints of $t_{\mathrm{u}}$ in problem \eqref{eqn:tmin_beta_third_case}. Here we use the bisection search to obtain the optimal $t_{\mathrm{u}}$.
 
By defining $R_{\mathrm{total}}(\beta=\beta_{2}(t_{\mathrm{u}}))$ to be the optimal value of problem \eqref{eqn:tmin_beta_third_case}, the optimal value of Problem (P2) equals $R_{\mathrm{total}}^{*}\in \mathbb{R}$, which is defined to be 
\begin{align}
  R_{\mathrm{total}}^{*}:=&
  \max\left(R_{\mathrm{total}}(\beta=1),
  R_{\mathrm{total}}(\beta= \beta_{1}(t_{\mathrm{u}})),\right.\nonumber\\&\left.
  R_{\mathrm{total}}(\beta= \beta_{2}(t_{\mathrm{u}}))\right).
\end{align}
The whole algorithm for addressing Problem (P2) optimally is summarized in Algorithm \ref{alg:A2}.

In summary, the optimal solution of Problem (P2) has a special structure, in which only three cases occur. The first case is that the user only offloads data to the helper due to the short distance and high rate. The second case is that the helper fully uses the whole transmit time for receiving and then processing user's offloaded data. The third case is that all helper's available energy for processing user's offloaded data is used.
\begin{algorithm}[tb]
\caption{Optimal algorithm for Problem (P2)  }\label{alg:A2}

\begin{algorithmic}[1]
\STATE Obtain $R_{\mathrm{total}}^{*}$ by solving \eqref{eqn:r56}, \eqref{eqn:r50}, and \eqref{eqn:tmin_beta_third_case}.
\IF {$R_{\mathrm{total}}^{*}=R_{\mathrm{total}}(\beta=1)$}
\STATE Obtain $t_{\mathrm{u}}^{*}$ by \eqref{eqn:r55}.
\STATE $\beta^* = 1$.
\ELSIF {$R_{\mathrm{total}}^{*}=R_{\mathrm{total}}(\beta=\beta_{1}(t_{\mathrm{u}}) )$}
\STATE Obtain $t_{\mathrm{u}}^{*}$ by solving problem \eqref{eqn:r50}.
\STATE $\beta^* = \beta_{1}(t^{*}_{\mathrm{u}})$.
\ELSE 
\STATE Obtain $t_{\mathrm{u}}^{*}$ by solving problem \eqref{eqn:tmin_beta_third_case}.
\STATE $\beta^* = \beta_{2}(t^{*}_{\mathrm{u}})$.
\ENDIF
\STATE Set $t_{\mathrm{h}}^*=T-t_{\mathrm{u}}^*$.
\end{algorithmic}
\end{algorithm}

\subsection{Special Case}
\subsubsection{High Signal-to-noise}
Here we consider a special case of high signal-to-noise (SNR) for Problem (P2). Note that Lemma \ref{Lm4} is applicable for this case as well. Then Problem (P2) can be simplified as
\begin{align}\label{eqn:r70}
\max_{\beta,\bm{t}\geq \bm{0}}&\quad w_{\mathrm{u}}t_{\mathrm{u}}\left(\log_{2}\frac{h_{\mathrm{u,h}}}{h_{\mathrm{u,a}}}+R_{1}\right)+w_{\mathrm{h}}t_{\mathrm{h}}R_{2} \\
{\rm s.t.}
 &\quad \kappa f^{2}_{\mathrm{h}} t_{\mathrm{u}}\log (1+\beta\bar{P}_{\mathrm{u}} h_{\mathrm{u,h}}) \leq E'_{\mathrm{h}},\label{eqn:const_enegy_helper2}\\
  &\quad t_{\mathrm{u}}\left(1+
  \frac{C_{\mathrm{u}}}{f_{\mathrm{h}}} \log_{2}
  \left(1+\beta\bar{P}_{\mathrm{u}}
  h_{\mathrm{u,h}}\right)\right)\leq T,\label{eqn:r169}\\
  &\quad t_{\mathrm{u}}+t_{\mathrm{h}}\leq T,\label{eqn:r168}\\
  &\quad 0\leq \beta \leq 1 \label{eqn:r149}.
  \end{align}
We can see that the power allocation proportion factor $\beta$ is eliminated in the objective function and only exists in the constraint \eqref{eqn:r169}. To maximize the objective value of the above problem, it is straightforward that the optimal value of $\beta$ is $\beta^*=0$ in the constraints \eqref{eqn:r169} and \eqref{eqn:const_enegy_helper2} so that $\bm{t}$ has a larger feasible set. Thus in this case, the user spends all of its own transmit power in offloading data to the AP. It is reasonable because the channel between the user and AP becomes strong under the high SNR condition and the computation latency at the AP is negligible, while the helper is restricted by the computation capacity (i.e., constraint  \eqref{eqn:r169}) and the limit of energy (constraint \eqref{eqn:const_enegy_helper2}) although its channel is also strong. As a result, the user only offloads data to the AP.

So with $\beta^*=0$,  problem \eqref{eqn:r70} turns out to be a linear programming problem and the optimal  solution can be illustrated as
\begin{align}
t_{\mathrm{u}}^*=\left\{\begin{array}{ll}
0&  	w_{\mathrm{u}}(\log_{2}\frac{h_{\mathrm{u,h}}}{h_{\mathrm{u,a}}}+R_{1})\leq w_{\mathrm{h}}R_{2},\\
T
&
\text{otherwise},\label{eqn:r71}
\end{array}\right.
\end{align}
and $t_{\mathrm{h}}^* = T-t_{\mathrm{u}}^*$.

The solution in \eqref{eqn:r71} shows that, if the user and helper's offloading channels are in high SNR, the optimal strategy is enabling one of the user and helper to occupy the whole time to offload data to the AP.

\subsubsection{Same Channel Condition From User to Helper and AP}\label{sameLoc}
Here we provide some theoretical analysis of our proposed cooperative scheme and two benchmark schemes so-called TDMA-based offloading scheme and NOMA-aided offloading scheme under the case that the helper and AP are at the same location. The details of TDMA-based offloading scheme and NOMA-aided offloading scheme are shown in Section \ref{se4}.

 When the distance between the helper and the AP is close to zero, we have $h_{\mathrm{h,a}}\rightarrow +\infty$ and $h_{\mathrm{h,a}}\gg h_{\mathrm{u,a}}=h_{\mathrm{u,h}}$.  In terms of the proposed scheme, it can be verified that Lemma \ref{Lm4} is also applicable for this case. With $h_{\mathrm{u,a}}=h_{\mathrm{u,h}}$ and Lemma \ref{Lm4}, Problem (P2) can be rewritten as
\begin{eqnarray}
\max_{\bm{t}\geq \bm{0}}&&w_{\mathrm{u}}t_{\mathrm{u}}R_{1}+w_{\mathrm{h}}t_{\mathrm{h}}R_{2}\nonumber \\
{\rm s.t.}
  &&t_{\mathrm{u}}\left(1+\frac{C_{\mathrm{u}}}{f_{\mathrm{h}}} \log_{2}\left(1+
  \beta\bar{P}_{\mathrm{u}}h_{\mathrm{u,h}}
  \right)\right)\leq T\label{eqn:r180}, \\
  && \kappa f^{2}_{\mathrm{h}} t_{\mathrm{u}}\log (1+\beta\bar{P}_{\mathrm{u}} h_{\mathrm{u,h}}) \leq E'_{\mathrm{h}},\label{eqn:const_enegy_helper3}\\
  &&t_{\mathrm{u}}+t_{\mathrm{h}}\leq T.
\end{eqnarray}
 To have a larger feasible set for variable $\bm{t}$ in the above problem, the optimal solution of $\beta$ is $\beta^*=0$ in constraints \eqref{eqn:r180} and \eqref{eqn:const_enegy_helper3}. Then this problem can be further simplified as
 \begin{eqnarray}
\max_{\bm{t}\geq \bm{0}}&&w_{\mathrm{u}}t_{\mathrm{u}}R_{1}+w_{\mathrm{h}}t_{\mathrm{h}}R_{2}\nonumber \\
{\rm s.t.}
  && t_{\mathrm{u}}+t_{\mathrm{h}}\leq T, \label{eqn:r150}
\end{eqnarray}
which is the same as the TDMA-based offloading data maximization problem.

As for the NOMA-aided offloading scheme without cooperation, we have two decoding orders. When $h_{\mathrm{u,a}}p_{\mathrm{u,a}}> h_{\mathrm{h,a}}p_{\mathrm{h,a}}$, we have $p_{\mathrm{h,a}}= 0$ because $h_{\mathrm{h,a}}\gg h_{\mathrm{u,a}}$ and $h_{\mathrm{h,a}}\rightarrow +\infty$. In this case the AP only receives signal offloaded by the user. In order to maximize the offloading data, the offloading rate for the user is $R_{1}$ in this condition. 

As for $h_{\mathrm{h,a}}p_{\mathrm{h,a}}\geq  h_{\mathrm{u,a}}p_{\mathrm{u,a}}$, the AP decodes the signal offloaded by helper first and then decodes the signal offloaded by the user using successive interference cancellation (SIC). The offloading rates of the user and the helper can be respectively written as 
\begin{align}
&R^{\mathrm{NOMA}}_{\mathrm{u,a}}=\log_{2}\left(1+h_{\mathrm{u,a}}p_{\mathrm{u,a}}\right),\\
&R^{\mathrm{NOMA}}_{\mathrm{h,a}}=\log_{2}\left(1+\frac{h_{\mathrm{h,a}}p_{\mathrm{h,a}}}{1+h_{\mathrm{u,a}}p_{\mathrm{u,a}}}\right).
\end{align}
The offloading data maximization problem for this decoding order can be formulated as
\begin{eqnarray}
\max_{p_{\mathrm{u,a}},p_{\mathrm{h,a}}}&& w_{\mathrm{u}}TR^{\mathrm{NOMA}}_{\mathrm{u,a}}+w_{\mathrm{h}}TR^{\mathrm{NOMA}}_{\mathrm{h,a}}\nonumber \\
{\rm s.t.}&& 0\leq p_{\mathrm{u,a}}\leq \bar{P}_{\mathrm{u}},\nonumber\\
&&0\leq p_{\mathrm{h,a}}\leq \bar{P}_{\mathrm{h}},\nonumber\\
&& h_{\mathrm{h,a}}p_{\mathrm{h,a}}\geq h_{\mathrm{u,a}}p_{\mathrm{u,a}}.\label{eqn:r152}
\end{eqnarray}
Obviously the optimal value of $p_{\mathrm{h,a}}$ is $\bar{P}_{\mathrm{h}}$ for problem \eqref{eqn:r152}. Since $h_{\mathrm{h,a}}\gg h_{\mathrm{u,a}}$ and $h_{\mathrm{h,a}}\rightarrow +\infty$, the optimal value of this problem is approximately as $w_{\mathrm{h}}TR_{2}$. 
Then the maximum offloading data for the NOMA-aided offloading scheme without cooperation is 
\begin{align}
\max(w_{\mathrm{u}}TR_{1}, w_{\mathrm{h}}TR_{2}), \label{eqn:r153}
\end{align}
which has the same value as the optimal value of problem \eqref{eqn:r150}.

Since $h_{\mathrm{h,a}}\rightarrow +\infty$ and $w_{\mathrm{u}}TR_{1}$ is a finite number, the optimal values of problem \eqref{eqn:r150} and problem \eqref{eqn:r153} equal $w_{\mathrm{h}}TR_{2}$. 
\section{Simulation Results}\label{se4}
\begin{table}[t]
\renewcommand\arraystretch{1.2}
\caption{\\Simulation Parameters} 
\centering 
\begin{tabular}{ | l | l|p{'2'}|} 
\hline
Bandwidth & 1 MHz \\ %
\hline
Distance between the user and AP & 150 meters\\
\hline
Distance between the helper and AP & 80 meters\\
\hline
Effective capacitance coefficient & $\kappa=10^{-26}$\\
\hline 
Helper's computation intensities  & $C_{\mathrm{u}}=1$ cycle/bit \\  %
\hline  
Helper's CPU frequency & 1 GHz\\ 
\hline  
\specialcell[c]{Helper's available energy for processing user's data} 
&\specialcell[c]{ $E'_{\mathrm{h}}=10^{-3}$Joule}\\
\hline 
Noise power & $\sigma^{2}=-$120 dBm \\  
\hline
Number of channel realizations & 2000\\
\hline 
\specialcell[t]{Path loss at a reference distance of 1 meter} & $10^{-3}$ \\
\hline 
Path loss exponent & 3 \\
\hline  
Power constraint for the user & $\bar{P}_{\mathrm{u}}=0.4$W\\ %
\hline
Power constraint for the helper & $\bar{P}_{\mathrm{h}}=0.8$W\\
\hline
Server's available computational capacity & $F=100$Kbits\\
\hline 
User's computation intensities & $C_{\mathrm{u}}=1$ cycle/bit \\  
\hline  
User's CPU frequency & 3 GHz\\ %
\hline
Weight of the user and helper & $w_{\mathrm{u}}=w_{\mathrm{h}}=1$ \\ %
\hline 
\end{tabular}
\label{table:sim_para} 
\end{table}

In this section, we provide simulation results to evaluate the proposed algorithms for the energy consumption minimization and offloading data maximization problems, respectively. The channel power gain is modeled as $g=cd^{-\phi}|\rho|^{2}$, where $c\in \mathbb{R}$ is the path loss at a reference distance of 1 meter and $d^{-\phi}$ denotes the propagation path loss with the path loss exponent $\phi\in \mathbb{R}_{\geq 0}$ and the distance $d\in \mathbb{R}_{\geq 0}$. Moreover, $\rho\sim \mathcal{CN}(0,1)$ denotes channel small-scale fading and $|\rho|$ follows Rayleigh distribution. The main system parameters are listed in Table \ref{table:sim_para}. For comparison purpose, we also consider the performance of the following benchmark schemes without cooperation:

\begin{enumerate}
	\item \textbf{TDMA-based offloading scheme (TDMA)}. In this scheme, the user and the helper respectively offload their tasks to the AP using TDMA. During the first slot $t_{\mathrm{u}}$, the user offloads a part of its input task to the AP. Then the helper is allowed to offload its input task in the remaining time. The proposed optimal algorithms (Algorithm \ref{alg:A1} and Algorithm \ref{alg:A2}) are also applicable for this case by  setting $p_{\mathrm{u,h}}$ and $\ell_{\mathrm{u,h}}$ equal to zeros.
	\item \textbf{NOMA-aided offloading scheme (NOMA)}. Here we also consider the NOMA-aided offloading without cooperation, where both the user and the helper partially offload their tasks to the AP based on NOMA. The solution for the energy consumption minimization problem can be found in \cite{Wang2017}. As for the sum data maximization problem, it is a simple convex optimization problem which can be easily solved  and the details are omitted here.
	\end{enumerate}
\subsection{Energy Consumption Minimization Case}
\begin{figure}[t]
\begin{centering}
\includegraphics[scale=0.5]{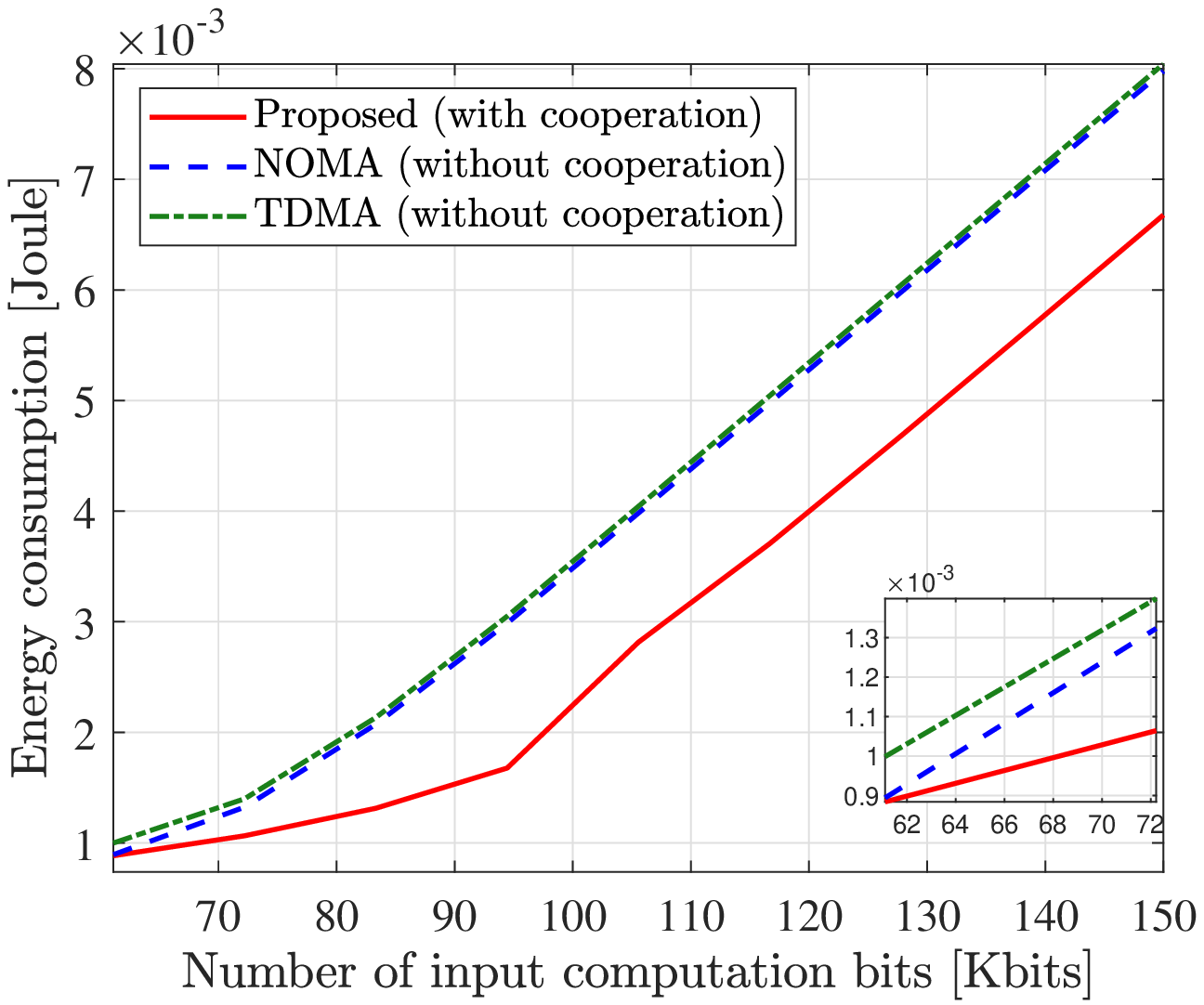}
\vspace{-0.1cm}
 \caption{ The average energy consumption versus $L_{\mathrm{u}}$. }\label{fig:E_min_Lu}
\end{centering}
\vspace{-0.1cm}
\end{figure}

In Fig. \ref{fig:E_min_Lu}, we plot the energy consumption  as a function of the input-bits of the user $L_{\mathrm{u}}$, with given $L_{\mathrm{h}}=80$ Kbits and $T=5$ ms. The distance between the user and the helper is fixed as 70 meters. First, we can observe that the proposed scheme is superior to the other benchmark schemes without cooperation, which validates the effectiveness of the proposed scheme. In particular, the proposed scheme has about $24\%$ energy consumption reduction on average, compared with the benchmark schemes. The energy consumption of  all schemes experiences a moderate increase with the growing number of input-bits $L_{\mathrm{u}}$. Note that the NOMA-aided offloading scheme consumes less energy than the TDMA-based offloading scheme. This is because in the NOMA-aided offloading scheme, both the user and the helper share the same resource block, which leads to a performance gain in the energy consumption. Moreover, the reasons for the performance gap among the two benchmark schemes and the proposed scheme can be explained as follows. In the proposed scheme, the user is allowed to offload data simultaneously to the AP and the helper, while in the benchmark schemes cooperative computing is not allowed. Note that the helper is closer to the user than the AP and thus it is more efficient for the user to offload a part of task to the helper, resulting in the performance gain of the proposed scheme. 

\begin{figure}[t]
\begin{centering}
\includegraphics[scale=0.5]{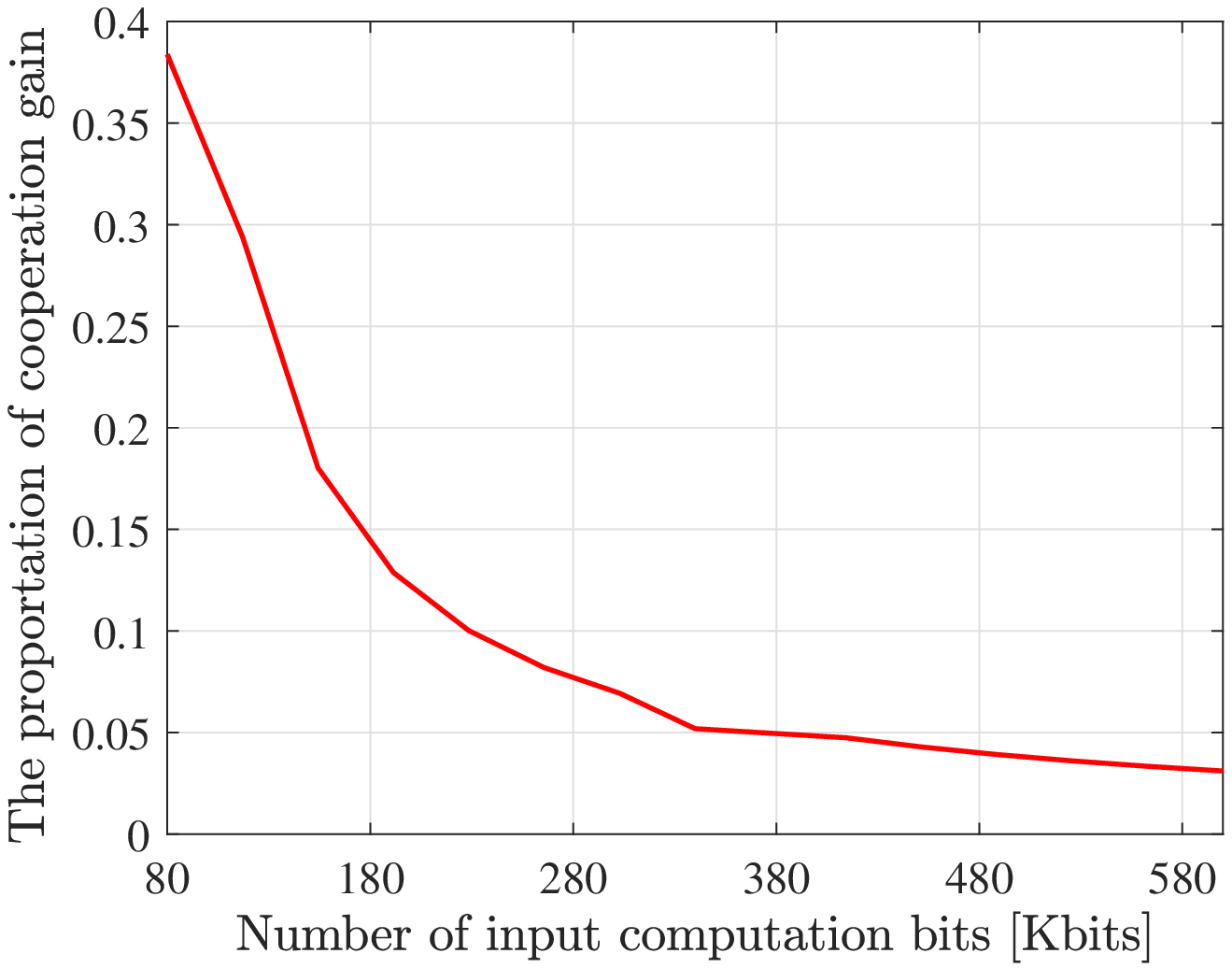}
\vspace{-0.1cm}
 \caption{The proportion of energy saving achieved by cooperation.}\label{fig:Emin_Lu_Larsacl}
\end{centering}
\vspace{-0.1cm}
\end{figure}

Given $L_{\mathrm{h}}=80$ Kbits, the numerical results of the proportion of the cooperation gain over a large range, where the input-data size varies from 80 Kbits to 600 Kbits are presented in Fig. \ref{fig:Emin_Lu_Larsacl}. The proportion of the cooperation gain is defined to be the energy saving achieved by cooperation divided by the total energy consumption of the proposed scheme. We can see that, as the input-data size is increasing, the proportion of the cooperation gain over the total energy consumption is decreasing. The reason is explained as follows: Compared with the schemes without cooperation, the cooperation gain is achieved over the user-to-helper offloading channel. Since the offloading channel capacity is bounded, the cooperation gain is also bounded. Thus when the input-data size is increasing, the proportion of energy saving achieved by cooperation is decreasing.

\begin{figure}[t]
\begin{centering}
\includegraphics[scale=0.5]{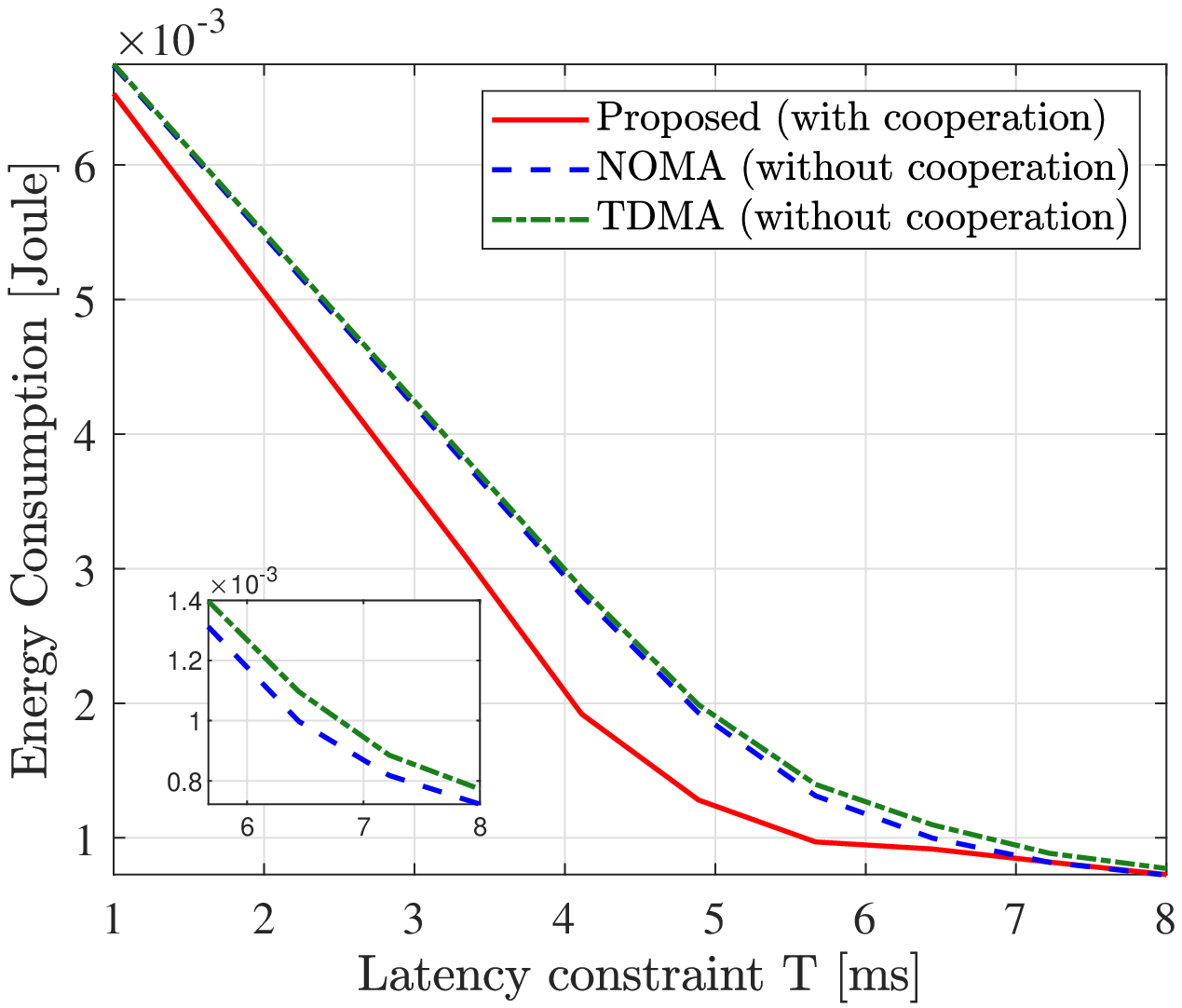}
\vspace{-0.1cm}
 \caption{ The average energy consumption versus $T$. }\label{fig:E_min_T}
\end{centering}
\vspace{-0.1cm}
\end{figure}

The curves of the energy consumption versus the latency constraint are plotted in Fig. \ref{fig:E_min_T}. Here $L_{\mathrm{u}}=L_{\mathrm{h}}=80$ Kbits. We can see that all schemes experience a linear decline in their energy consumption first and gradually become flat when the total transmit time $T$ continues to increase. The optimal scheme saves about $13\%$ energy consumption on average, compared with the schemes without cooperation. In particular, when there is a small $T$, i.e., strict latency requirement, the user and the helper need to offload the input-bits in a high rate to meet the latency constraint, resulting in high energy consumption. Thus a small increase in $T$ can substantially improve the system performance. For looser latency requirement $T$, the performance of the optimal scheme saturates. This is because that in our theoretical analysis, we find that the objective function of Problem (P1) can be rewritten as function $f_{2}(\bm{\ell})$. Note that $f_{2}(\bm{\ell})$ can be obtained by setting $t_{\mathrm{u}}=\alpha T$ and $t_{\mathrm{h}}=(1-\alpha)T$ in $f_{2}(\bm{\ell},\bm{t})$, which is defined in \eqref{eqn:r23}. We observe that $f_{2}(\bm{\ell})$ is an exponential function with respect to $1/T$. Therefore, with longer $T$, the slope of the line with respect to the performance of the optimal scheme becomes smoother.

\subsection{Offloading Data Maximization Case}
\begin{figure}[t]
\begin{centering}
\includegraphics[scale=0.5]{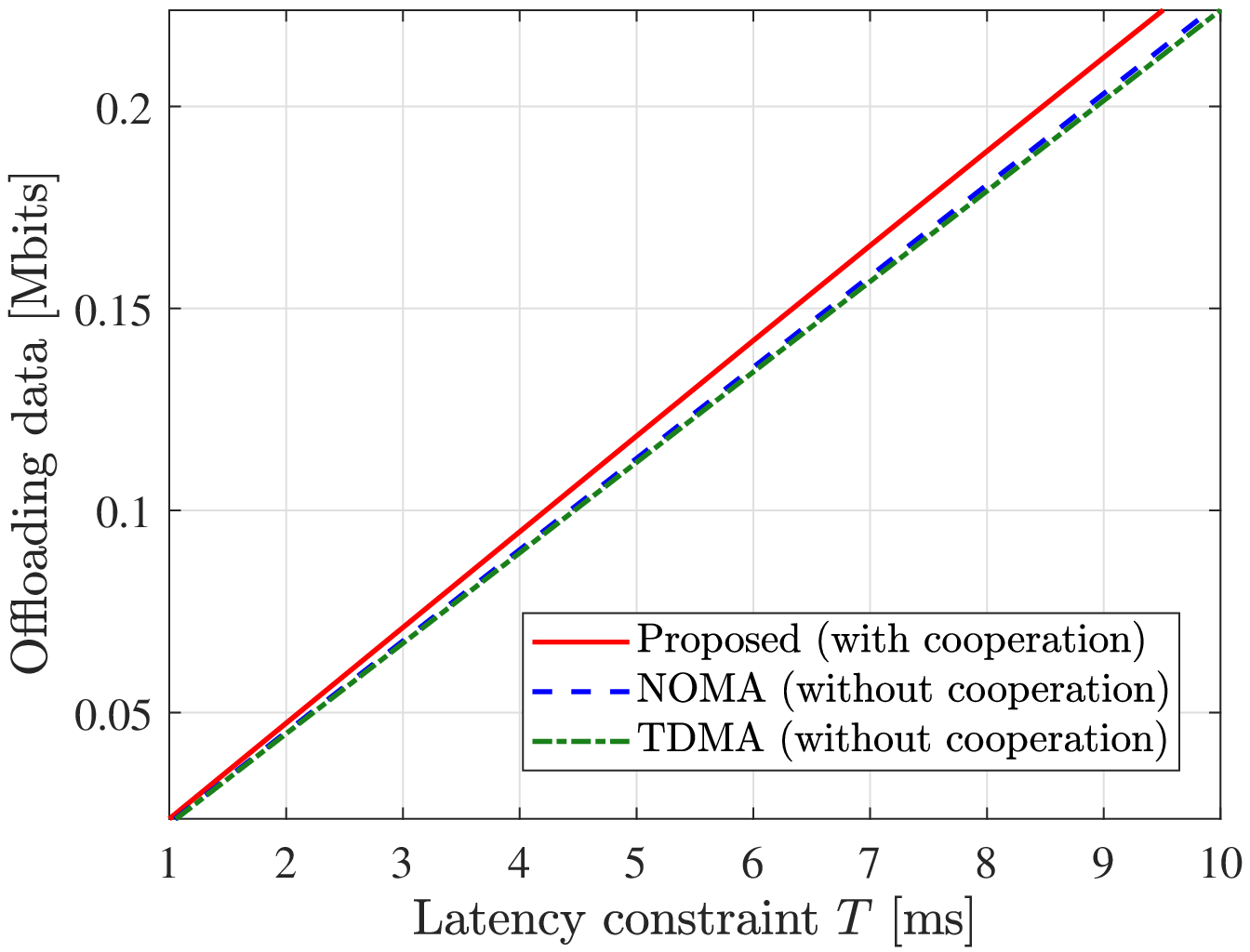}
\vspace{-0.1cm}
 \caption{ The average offloading data versus $T$. }\label{fig:rate_min_T}
\end{centering}
\vspace{-0.1cm}
\end{figure}

Fig. \ref{fig:rate_min_T} gives the results of the offloading data versus the latency constraint $T$. The distance between the user and the helper is fixed as 70 meters. Compared with other benchmark schemes, the proposed optimal scheme achieves the best performance. As $T$ increases, i.e., looser latency constraint, all the schemes' performance improve. It is worth to note that all schemes have approximately linear increase in their offloading data. This trend can be explained from the structure of the offloading data expression. That is, the offloading data is the product of transmission time and rates, i.e., a linear function with respect to the variables $t_{\mathrm{u}}$ and $t_{\mathrm{h}}$. The reasons accounting for this trend in other benchmark schemes are similar to the proposed scheme.

\begin{figure}[t]
\begin{centering}
\includegraphics[scale=0.5]{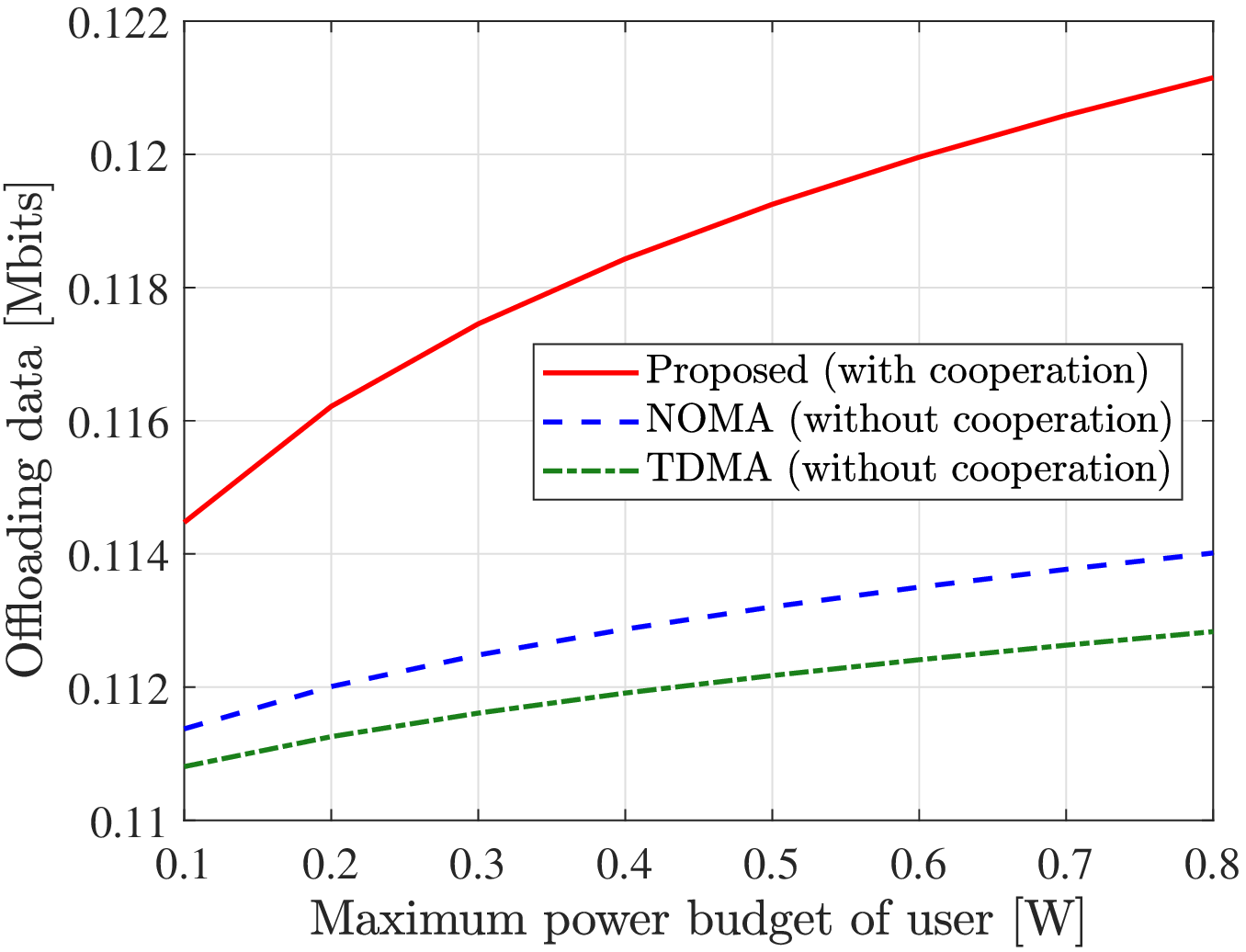}
\vspace{-0.1cm}
 \caption{ The average offloading data versus $\bar{P}_{\mathrm{u}}$. }\label{fig:rate_min_Pu}
\end{centering}
\vspace{-0.1cm}
\end{figure}

With fixed $T=5$ ms, the impact of the maximum transmit power of the user $\bar{P}_{\mathrm{u}}$ on the offloading data is illustrated in Fig. \ref{fig:rate_min_Pu}. In this figure we set the distance between the user and the helper as 70 meters. We can validate the effectiveness of the proposed scheme which provides a gap between itself and other benchmark schemes. The offloading data in all schemes has a logarithmic augment when $\bar{P}_{\mathrm{u}}$ increases. It is intuitive that the transmission rates and the offloading data increase when $\bar{P}_{\mathrm{u}}$  increases. Moreover, the NOMA scheme is slightly better than the TDMA scheme.
\begin{figure}[t]
\begin{centering}
\includegraphics[scale=0.5]{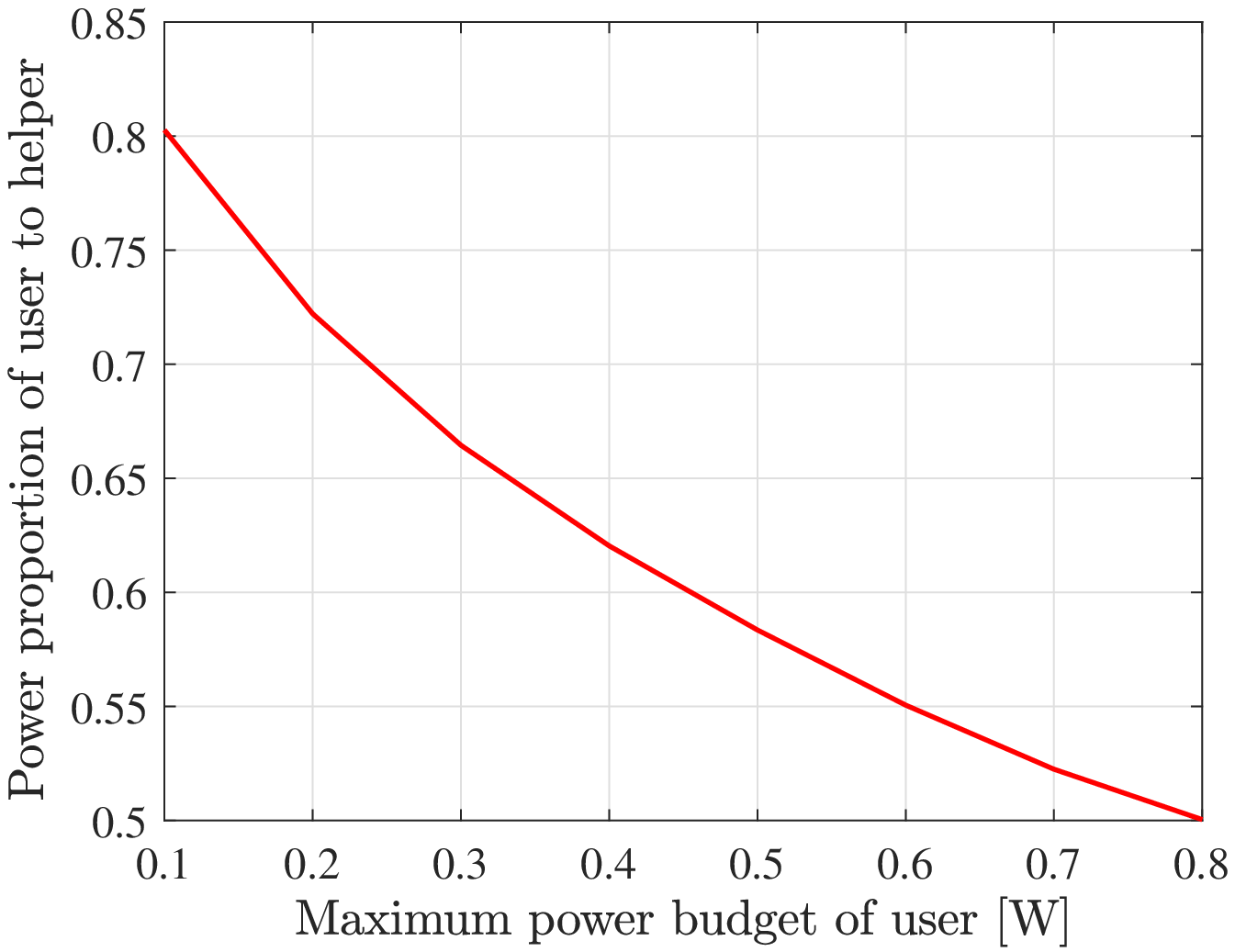}
\vspace{-0.1cm}
 \caption{ Power proportion of user to helper versus $\bar{P}_{\mathrm{u}}$. }\label{fig:rate_min_alp_pa}
\end{centering}
\vspace{-0.1cm}
\end{figure}

Fig. \ref{fig:rate_min_alp_pa} shows the power proportion of the user to the helper, i.e., $\beta$, versus the maximum transmit power constraint in the user $\bar{P}_{\mathrm{u}}$. We fix $T$ as 5 ms and the distance between the user and the helper as 70 meters. With increasing $\bar{P}_{\mathrm{u}}$, the user allocates more power to offload data to the AP. This trend can be explained from \eqref{eqn:r73} where larger $\bar{P}_{\mathrm{u}}$ leads to a smaller $\beta$. This is because, the helper is resource-constrained and needs more time to execute the received data offloaded by the user when the user transmits with higher rate. In this case, due to the latency constraint \eqref{eqn:r69} at the helper, the user prefers to offload more data to the AP.

\begin{figure}[t]
\begin{centering}
\includegraphics[scale=0.5]{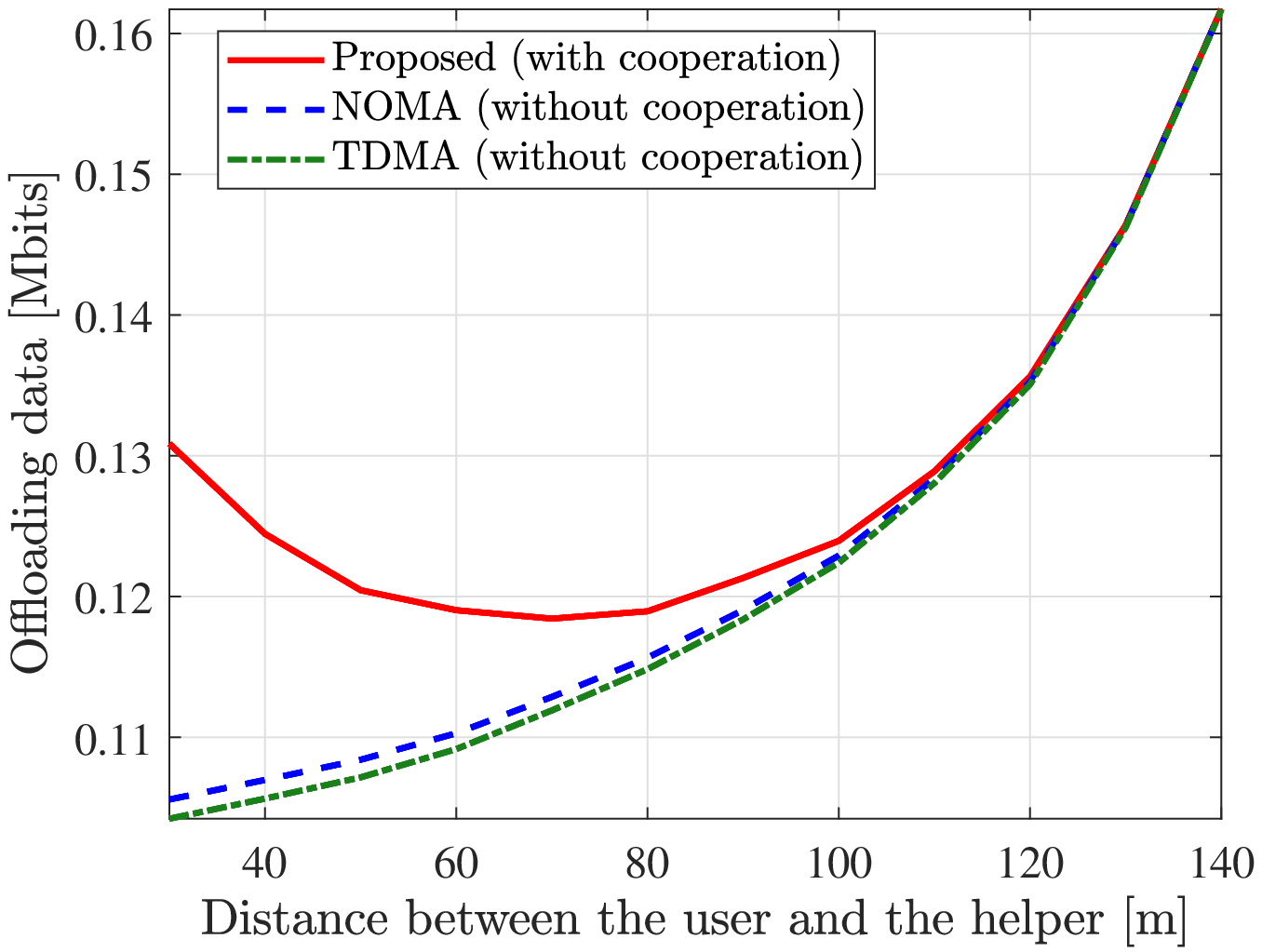}
\vspace{-0.1cm}
 \caption{ The average offloading data versus the distance between the user and helper. }\label{fig:rate_max_distance}
\end{centering}
\vspace{-0.1cm}
\end{figure}
 We investigate the influence of the distance between the user and helper on the offloading data in all schemes, as shown in Fig. \ref{fig:rate_max_distance}. Here the latency constraint $T=5$ ms. We can observe that when the distance between the user and helper is short, or the helper is close to the user, the cooperation yields a large performance gain. While the distance between the user and helper widens, which means that the helper is moving forward to the AP, the performance gaps between the proposed scheme and other benchmark schemes are narrowed. It is because when the helper is moving closer to the AP, the benefit of cooperative computing at the helper cannot compensate the cost of offloading data to the helper. Instead, the user offloads more data  directly to the AP. On the other hand, as the helper is moving closer to the AP, the helper has a better channel for offloading its own data. So in order to maximize the sum offloading data, the system tends to assign longer offloading time duration $t_{\mathrm{h}}$ to the helper. Based on these reasons, as the distance between the user and the helper increases (or the helper is moving closer to the AP), the performance of the proposed cooperation scheme first degrades due to the poorer channel conditions for performing cooperative computing, and then improves since the helper has a better offloading channel. Note that these schemes' performance finally converges to the same value when the helper and AP are at the same location, which has been proved mathematically in Section \ref{sameLoc}. The intuitive explanation is as follows: The proposed cooperation scheme has degree of freedom to utilize the user-to-helper channel for offloading compared with the schemes without cooperation. Thus, if the helper moves to the AP (or is far away from the user), the user-to-helper channel becomes worse and thus the cooperation gain is vanishing. When the helper and AP are at the same location, the user transmits all data to the AP as the AP has more computational capacity. In this case, the proposed cooperation scheme degenerates into the schemes without cooperation. It is worth noting that no matter where the helper is located between the user and the AP, the performance of the proposed cooperation scheme is better than or the same as that of the benchmarks without cooperation.
\begin{figure}[t]
\begin{centering}
\includegraphics[scale=0.5]{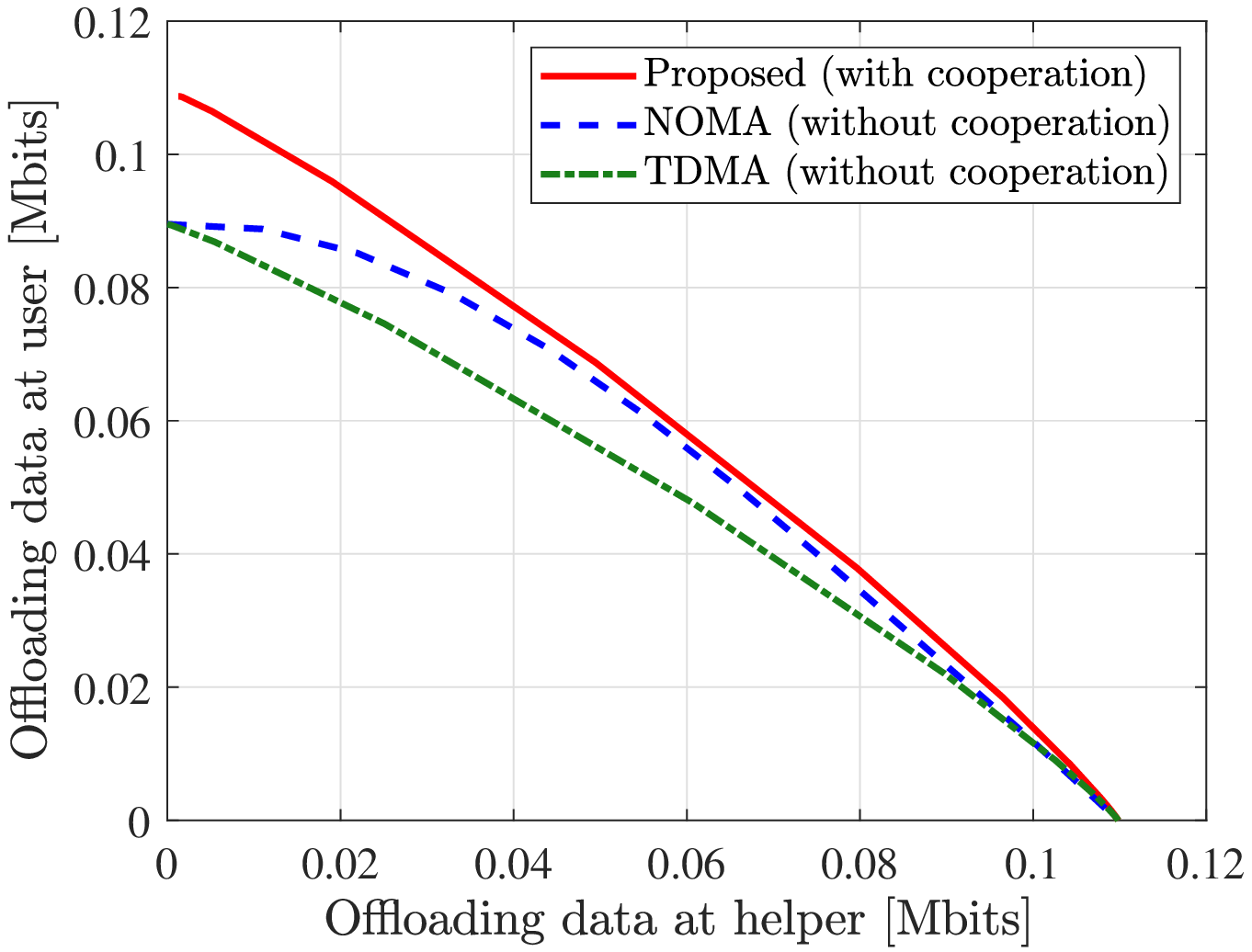}
\vspace{-0.1cm}
 \caption{ Offloading data regions. }\label{fig:rate_rate_region}
\end{centering}
\vspace{-0.1cm}
\end{figure}

Fig. \ref{fig:rate_rate_region} illustrates the offloading data regions of the proposed optimal scheme and the benchmark schemes without cooperation. The distance between the user and helper is fixed as 70 meters and $T=5$ ms. The proposed scheme provides an upper bound for other benchmark schemes. It is observed that when the offloading data in the helper is increasing, the total offloading data in these schemes finally converges to the same values. This is because when the user is not allowed to offload bits, these three schemes have the same offloading data expression and thus have the same optimal values. Note that when the offloading data in the helper is 0 bit, the proposed scheme still outperforms other benchmark schemes. The reason is that in the proposed optimal scheme the user can offload input-data to the helper and AP simultaneously while in the benchmark schemes cooperation computing is not allowed. 
\subsection{Both Cases}

\begin{figure}[t]
\begin{centering}
\includegraphics[scale=0.5]{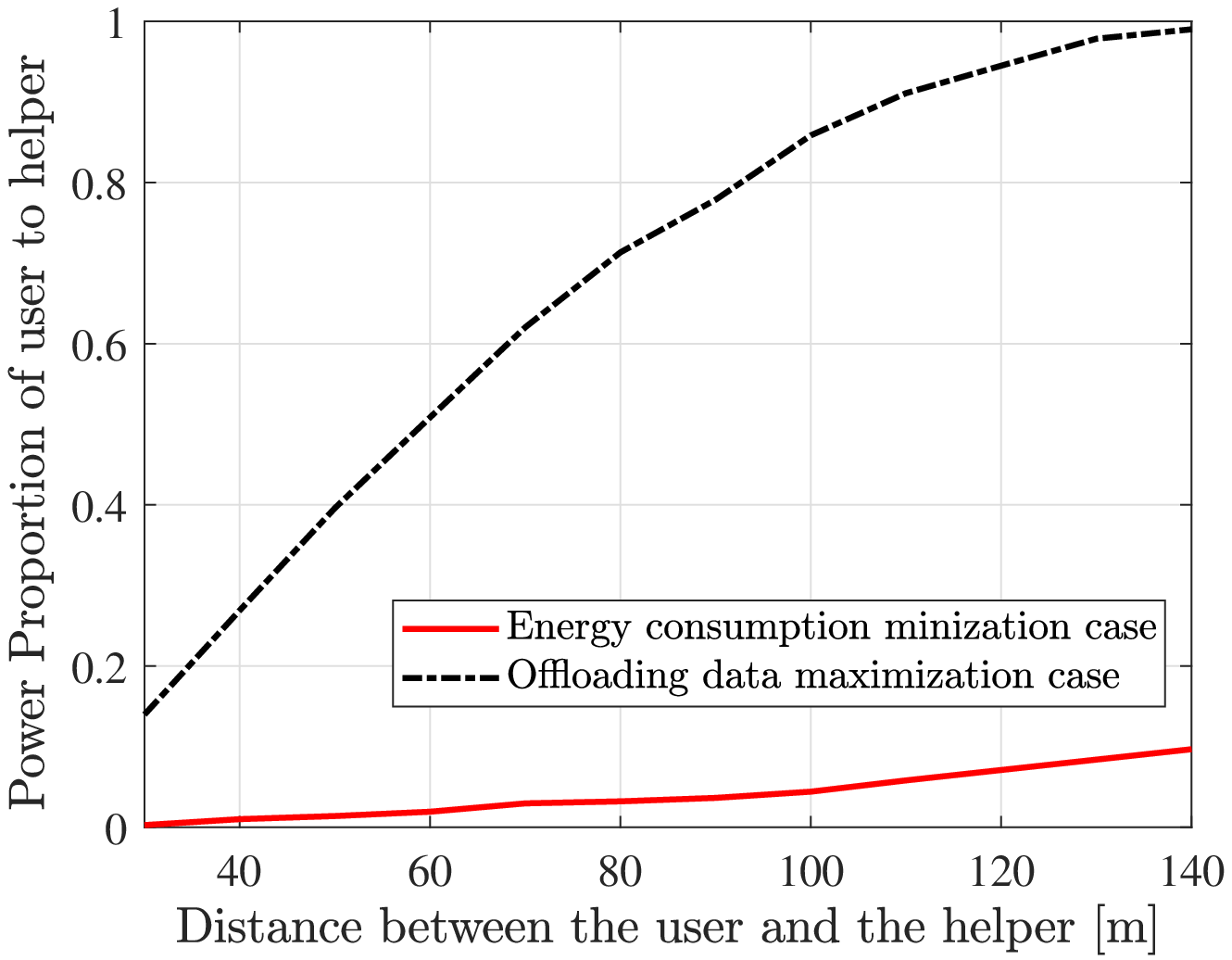}
\vspace{-0.1cm}
 \caption{ Power proportion of user to helper versus the distance between the user and helper. }\label{fig:joint_min_alp}
\end{centering}
\vspace{-0.1cm}
\end{figure}

Fig. \ref{fig:joint_min_alp} illustrates the impact of the distance between the user and helper on the power allocation proportion from the user to the helper, where the helper gradually moves away from the user. The latency $T$ is set to be $5$ ms. We can firstly see when the helper is close to the user, the most transmit power of the user is used for offloading data to the AP. This is because AP is considered to be resource-rich in this paper, thus the user prefers to allocating more transmit power to offload data to the AP. Moreover, we can see that when distance between the user and helper increases, the user allocates more transmit power to offload data to the helper. The reason may be that, in the energy consumption minimization case the user has to complete the computation task under the latency constraint. When the helper is getting away from the user, the user adds the transmit power to the helper so that it can meet the latency constraint. Additionally, note that the helper is located in the middle of the user and the AP and performs SIC. Thus the helper has a higher rate than that of the AP. Thus the user prefers to allocate more transmit power to the helper for rate maximization.

\begin{figure}[t]
\centering
\subfigure[Energy consumption versus $w_{\mathrm{h}}$.]{
\begin{minipage}[t]{1\linewidth}\label{fig:Emin_max_weight1}
\centering
\includegraphics[scale=0.5]{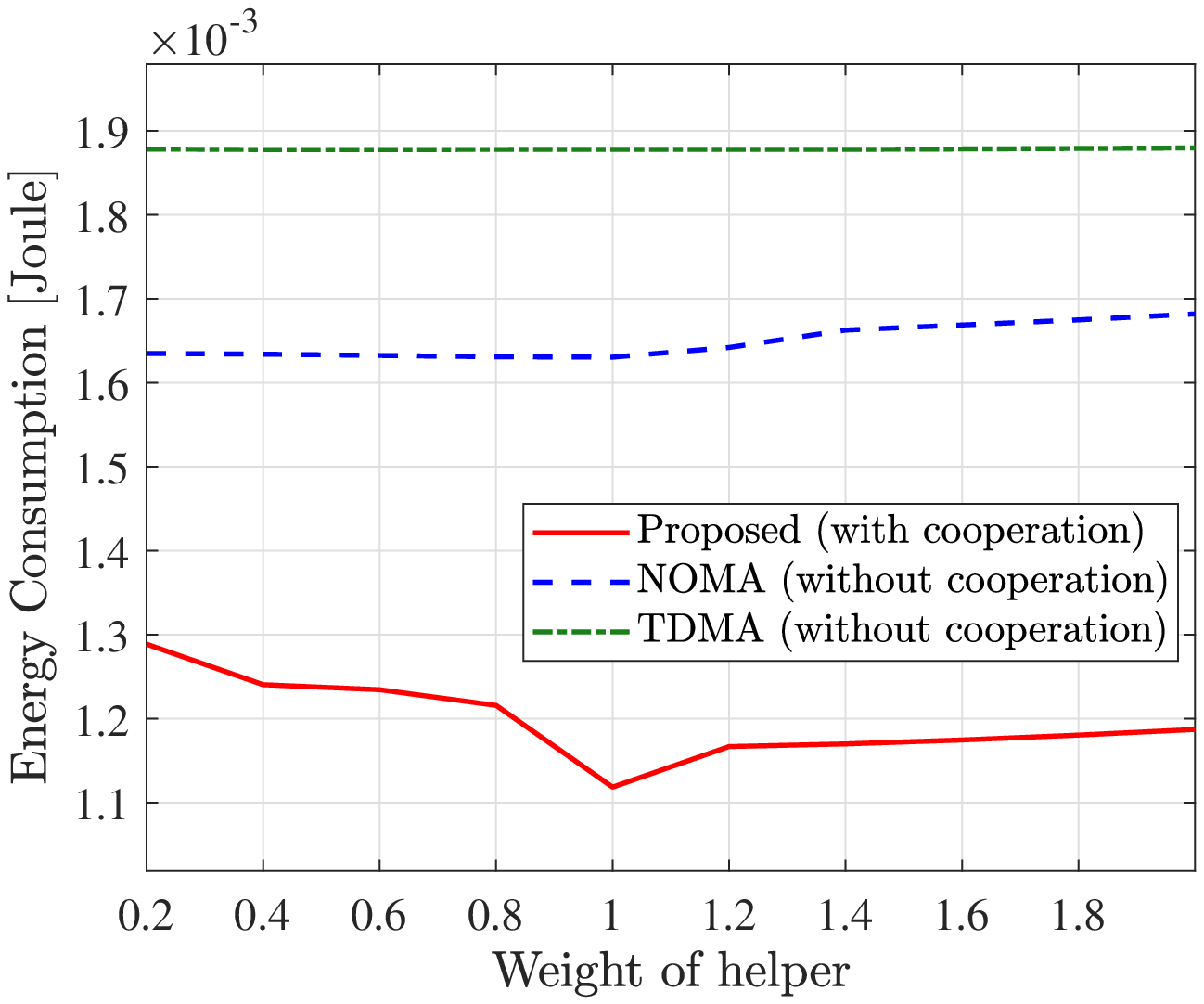}

\end{minipage}%
}%

\subfigure[Offloading data versus $w_{\mathrm{h}}$.]{
\begin{minipage}[t]{1\linewidth}\label{fig:rate_max_weight1}
\centering
\includegraphics[scale=0.5]{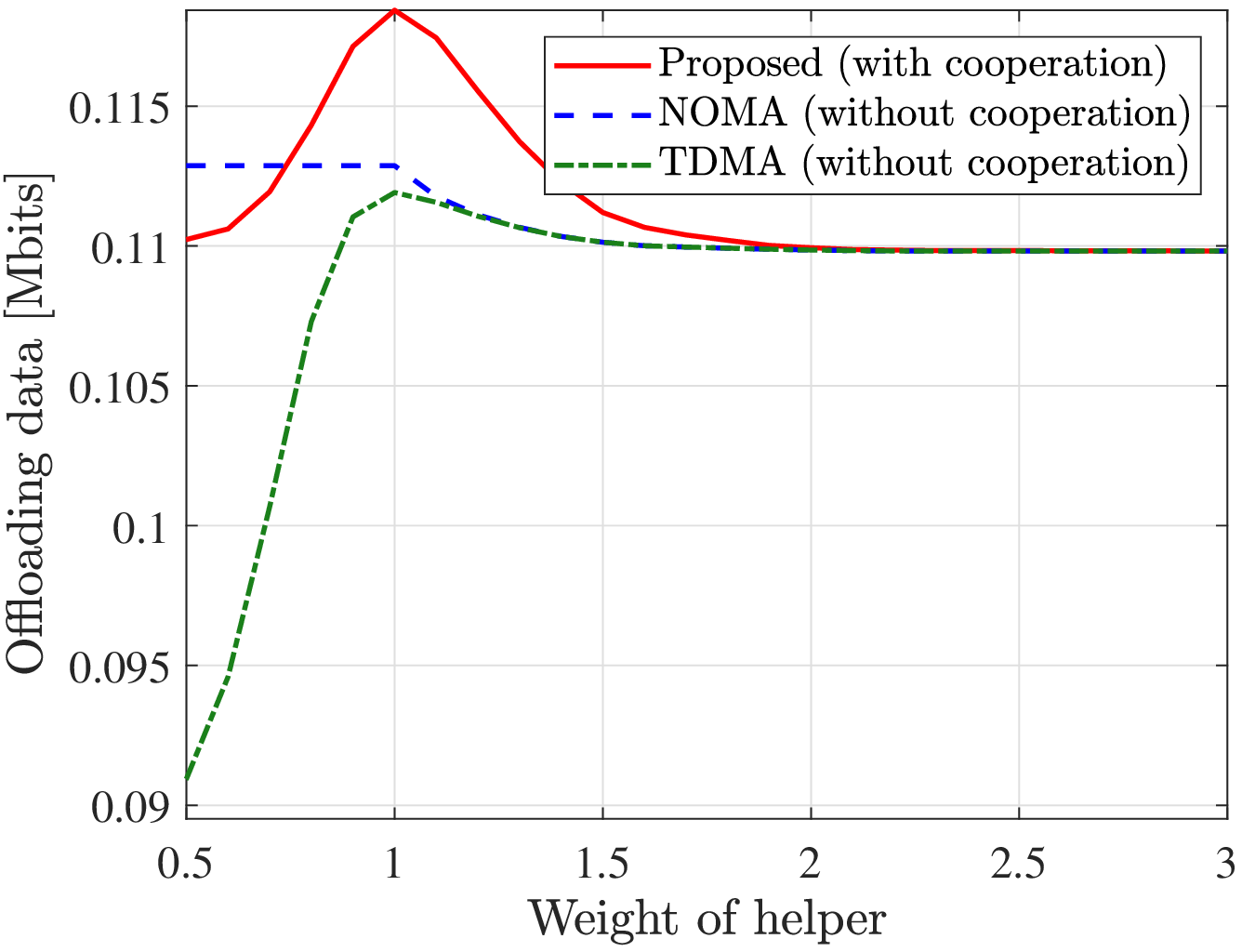}
\end{minipage}%
}%
\caption{The impact of weight factors.}
\end{figure}

 The impact of varying the helper's weight $w_{\mathrm{h}}$ is reflected in Fig. \ref{fig:Emin_max_weight1} and Fig. \ref{fig:rate_max_weight1} for both energy consumption minimization and offloading data maximization, where we fix the user's weight $w_{\mathrm{u}}=1$, $T=5$ ms and the distance between the helper and user as 70 meters. From these figures we can see that when $w_{\mathrm{h}}<w_{\mathrm{u}}$, the performance for both energy consumption minimization and offloading data maximization experiences a growth as the weight increases until $w_{\mathrm{u}}=w_{\mathrm{h}}=1$. When $w_{\mathrm{h}}>w_{\mathrm{u}}$, the performance of all schemes shows a  downward trend. This shows that $w_{\mathrm{u}}=w_{\mathrm{h}}$ is the most advantageous for optimizing the system performance. If one weight is larger than the other one, it takes priority over the other one when allocating system resources, which sacrifices the system performance.

\section{Conclusions}\label{se5}
In this paper, we have proposed a joint communication and computation resource sharing scheme in NOMA-aided cooperative computing system, where the user is enabled to offload task to the helper and the AP simultaneously using NOMA. We have studied the energy consumption minimization problem and offloading data maximization problem, respectively. We have solved the two non-convex problems and obtained some useful insights for practical design as follows.

For the energy consumption minimization problem, first, the user needs to offload more bits to the helper if its per-bit energy consumption of local computing is higher than that of the helper. Otherwise, the user prefers to locally compute more bits. Second, when the helper does not have its own task to execute, the user tries to make its offloading time as long as possible until the helper cannot complete the task offloaded by the user in the remaining time. 

For the offloading data maximization problem,  first, in high SNR, the user spends all transmit power in offloading data to the AP. The optimal strategy is enabling one of the user and helper to occupy the whole transmit time to offload data to the AP. Second, as the distance between the user and the helper increases (or the helper is moving closer to the AP), the performance of the proposed cooperation scheme first degrades due to the poorer channel conditions for performing cooperative computing, and then improves since the helper has a better channel for offloading its own task to the AP.
\begin{appendices}
\section{Proof of Lemma \ref{Lm5}} \label{AP5}
It is obvious that $f_{1}(x)$ is a convex function when the domain of $f_{1}$, i.e., $\textbf{dom} \ f_{1}$, is a convex set. Then to prove the convexity of $f_{2}(\bm{\ell},\bm{t})$, we need to introduce the perspective function of $f_{1}(x)$. Function $g_3$ is defined to be the mapping
  \begin{align}
  	g_{3}: \mathbb{R}^{2}\rightarrow\ & \mathbb{R},\nonumber\\	(x,t)\mapsto\ & tf_{1}(x/t).
  \end{align}
In \cite{Boyd2004}, the authors proved that the perspective operation preserves convexity. Thus $g_3(x,t)$ is a convex function. 

 The feasible region of Problem (P1) with respect to $(\bm{\ell},\bm{t})$ is equivalent to the set 
  \begin{align}
   	S_{1}:=\{(\bm{\ell},\mathbf{t})\in \mathbb{R}^{3}\times \mathbb{R}: \eqref{eqn:r14}, \eqref{eqn:r15}, \eqref{eqn:r16}, \eqref{eqn:r60}\}
  \end{align}
  Note that the constraints \eqref{eqn:r14}, \eqref{eqn:r15}, \eqref{eqn:r16}, and \eqref{eqn:r60} are linear inequalities, implying that $S_{1}$ is a convex set. Note that function $f_{2}(\bm{\ell},\bm{t})$ equals
 \begin{align}
      f_{2}(\bm{\ell},\bm{t})
	=&\frac{w_{\mathrm{u}}}{h_{\mathrm{u,h}}}g_3(\ell_{\mathrm{u,h}}
	+\ell_{\mathrm{u,a}},t_{\mathrm{u}})+\frac{w_{\mathrm{h}}}{h_{\mathrm{h,a}}}g_3(\ell_{\mathrm{h,a}},t_{\mathrm{h}} )
 \nonumber\\&+ \left(\frac{w_{\mathrm{u}}}{h_{\mathrm{u,a}}}
-\frac{w_{\mathrm{u}}}{h_{\mathrm{u,h}}}\right)g_3(\ell_{\mathrm{u,a}},t_{\mathrm{u}} )
\nonumber\\&
+w_{\mathrm{u}}\kappa f_{\mathrm{u}}^{2}(L_{\mathrm{u}} - \ell_{\mathrm{u,h}} - \ell_{\mathrm{u,a}})
\nonumber\\&
+w_{\mathrm{h}}\kappa f_{\mathrm{h}}^{2}(L_{\mathrm{h}}-\ell_{\mathrm{h,a}}+\ell_{\mathrm{u,h}}),
 \end{align}
which is a linear combination of convex functions $g_3(\ell_{\mathrm{u,h}}+\ell_{\mathrm{u,a}},t_{\mathrm{u}})$, $g_3(\ell_{\mathrm{u,a}},t_{\mathrm{u}})$, $g_3(\ell_{\mathrm{h,a}},t_{\mathrm{h}})$, and the linear functions with respect to $\bm{\ell}$. Note that in our system setup we assume that $h_{\mathrm{u,a}}\leq h_{\mathrm{u,h}}$ due to the fact that the helper is closer to the AP than the user. Thus all the coefficients associated with the convex functions $g_3(\ell_{\mathrm{u,h}}+\ell_{\mathrm{u,a}},t_{\mathrm{u}})$, $g_3(\ell_{\mathrm{u,a}},t_{\mathrm{u}})$, and $g_3(\ell_{\mathrm{h,a}},t_{\mathrm{h}})$ are positive. As a result, $f_{2}(\bm{\ell},\bm{t})$ is a convex function with respect to $\{\bm{\ell},\bm{t}\}$ over $S_{1}$.

\section{Proof of Lemma \ref{Lm1}} \label{AP1}
By defining $f_{3}$ to be the mapping 
\begin{align}
f_{3}: \mathbb{R}\rightarrow\ &\mathbb{R},\nonumber\\ x\mapsto\ &(1-x\ln2)2^{x}-1,
\end{align}
we find that $f_{3}(0)=0$. Then the derivative of $f_{3}(x)$ is 
\begin{align}
	\dv{f_{3}}{x}=-(\ln(2))^{2}x2^{x}\leq0,\ \forall x\geq 0.\label{eqn:r24}
\end{align}
From \eqref{eqn:r24}, we know that $f_{3}$ is non-increasing with respect to $x$. Thus $f_{3}(x)\leq 0$ holds for $x\geq 0$.
\section{Proof of Lemma \ref{Lm2}}\label{AP3}
We can prove this lemma by proving $\frac{\partial f_{4}}{\partial t_{\mathrm{u}}}\leq 0$, which is similar to the proof in Appendix \ref{AP1}. Since $f_{4}$ is non-increasing with $t_{\mathrm{u}}\geq 0$, to minimize $f_{4}$ in problem \eqref{eqn:r42}, the constraint $t_{\mathrm{u}}+\frac{\ell_{\mathrm{u,h}}}{f_{\mathrm{h}}} \leq T$ should be activated.
\section{Proof of Lemma \ref{Lm4}}\label{AP4}
The objective function of Problem (P2) equals
\begin{align}
	&w_{\mathrm{u}}t_{\mathrm{u}}\log_{2}\left(\frac{1+p_{\mathrm{u,h}}h_{\mathrm{u,h}}}
	{1+p_{\mathrm{u,h}}h_{\mathrm{u,a}}}\right)
	+  w_{\mathrm{h}}t_{\mathrm{h}}\log_{2}(1 + h_{\mathrm{h,a}}\bar{P}_{\mathrm{h}})
	\nonumber\\&+w_{\mathrm{u}}t_{\mathrm{u}}
	\log_{2}\bigl(1+h_{\mathrm{u,a}}(p_{\mathrm{u,h}} + p_{\mathrm{u,a}})\bigr)  .\label{eqn:r74}
\end{align}
Note that \eqref{eqn:r74} is an increasing function with respect to $p_{\mathrm{u,a}}$. As a result, to maximize the objective function of Problem (P2), the constraint associated with the maximum value of $p_{\mathrm{u,a}}$ \eqref{eqn:r17} needs to be activated, i.e., $p_{\mathrm{u,h}}+p_{\mathrm{u,a}}=\bar{P}_{\mathrm{u}}$. Otherwise, the objective function can be further improved by increasing the value of $p_{\mathrm{u,a}}$ until $p_{\mathrm{u,h}}+p_{\mathrm{u,a}}=\bar{P}_{\mathrm{u}}$. 
\section{Proof of Lemma \ref{lem5}} \label{AP2}
Since the first term in $f_{5}(t_{\mathrm{u}})$ is a linear function with respect to $t_{\mathrm{u}}$ and the second term is a constant, we only need to prove the third term's concavity to further prove that $f_{5}(t_{\mathrm{u}})$ is a concave function. For convenience, defining the mapping $f_{6}$ to be
\begin{align}
f_{6}:\mathbb{R}_{\geq 0}\rightarrow\ &\mathbb{R},\nonumber\\ t_{\mathrm{u}}\mapsto\ &-w_{\mathrm{u}}t_{\mathrm{u}}\log_{2}(1-\frac{h_{\mathrm{u,a}}}{h_{\mathrm{u,h}}}+\frac{h_{\mathrm{u,a}}}{h_{\mathrm{u,h}}}2^{\frac{f_{\mathrm{h}}(T-t_{\mathrm{u}})}{t_{\mathrm{u}}C_{\mathrm{u}}}}),
\end{align}
which is the third term in function \eqref{eqn:r64}, the derivative of $f_{6}(t_{\mathrm{u}})$ with respect to $t_{\mathrm{u}}$ is
\begin{align}
	\dv{f_{6}}{t_{\mathrm{u}}}=
	&-w_{\mathrm{u}}\log_{2}
	\left(1-\frac{h_{\mathrm{u,a}}}
	{h_{\mathrm{u,h}}}
	+\frac{h_{\mathrm{u,a}}}
	{h_{\mathrm{u,h}}}
	2^{\frac{f_{\mathrm{h}}
	(T-t_{\mathrm{u}})}
	{t_{\mathrm{u}}C_{\mathrm{u}}}}\right)
	\nonumber\\&
	+\frac{w_{\mathrm{u}}
	Tf_{\mathrm{h}}h_{\mathrm{u,a}}
	(\ln 2)2^{\frac{f_{\mathrm{h}}(T-t_{\mathrm{u}})}
	{t_{\mathrm{u}}C_{\mathrm{u}}}}}{C_{\mathrm{u}} 
	t_{\mathrm{u}}\left(h_{\mathrm{u,h}}
	-h_{\mathrm{u,a}}
	+h_{\mathrm{u,a}}2^{\frac{f_{\mathrm{h}}
	(T-t_{\mathrm{u}})}{t_{\mathrm{u}}
	C_{\mathrm{u}}}}\right)}.\label{eqn:r51}
\end{align}
To prove the concavity of $f_{6}(t_{\mathrm{u}})$, we further investigate the second derivative of $f_{6}(t_{\mathrm{u}})$ in the following.
\begin{align}
	&\left(\dv{}{t_{\mathrm{u}}}\right)^{2}f_{6}=
	-\frac{w_{\mathrm{u}}(Tf_{\mathrm{h}}\ln 2)^{2}(h_{\mathrm{u,h}}-h_{\mathrm{u,a}})h_{\mathrm{u,a}}2^{\frac{f_{\mathrm{h}}(T-t_{\mathrm{u}})}{t_{\mathrm{u}}C_{\mathrm{u}} }}}{C_{\mathrm{u}}^{2}t_{\mathrm{u}}^{3}(h_{\mathrm{u,h}}-h_{\mathrm{u,a}}+h_{\mathrm{u,a}}2^{\frac{f_{\mathrm{h}}(T-t_{\mathrm{u}})}{t_{\mathrm{u}}C_{\mathrm{u}}}})^{2}}.\label{eqn:r52}
\end{align}
As we mentioned before, because of $h_{\mathrm{u,h}}>h_{\mathrm{u,a}}$, $(\dv{}{t_{\mathrm{u}}})^{2}f_{6}\leq 0$ is satisfied. 

Since the inequality constraints in problem \eqref{eqn:r50}, i.e., $  t_{\mathrm{u}}\geq \frac{Tf_{\mathrm{h}}}{f_{\mathrm{h}}+C_{\mathrm{u}}\log_{2}(1+\bar{P}_{\mathrm{u}}h_{\mathrm{u,h}})}$ and $0\leq t_{\mathrm{u}}\leq T$ are linear inequality constraints with respect to $t_{\mathrm{u}}$, the feasible set of problem \eqref{eqn:r50} is a convex set. Then we can conclude that $f_{5}(t_{\mathrm{u}})$ is concave with respect to $t_{\mathrm{u}}$ over the feasible set of problem \eqref{eqn:r50}.
\end{appendices}

 \begin{footnotesize}
\bibliographystyle{IEEEtran}
 \bibliography{TCOM-TPS-19-0742_R1}
\end{footnotesize}

\end{document}